\newcommand{\papertitle}{Estimating the Coherence of Noise}
\newcommand{\pdfauthor}{Joel Wallman, Chris Granade, Robin Harper, Steven Flammia}
\pgfplotsset{compat=newest}
\theoremstyle{plain}
\newtheorem{thm}{Theorem}
\newtheorem{prop}[thm]{Proposition}
\theoremstyle{definition}
\theoremstyle{remark}
\renewenvironment{proof}[1][Proof]{\noindent\textbf{#1.} }{\ $\Box$}
\definecolor{nblue}{rgb}{0.2,0.2,0.7}
\definecolor{ngreen}{rgb}{0.1,0.5,0.1}
\definecolor{nred}{rgb}{0.8,0.2,0.2}
\definecolor{nblack}{rgb}{0,0,0}
\newcommand{\bs}[1]{\boldsymbol{#1}}
\newcommand{\G}{\mathcal{G}}
\newcommand{\E}{\mathcal{E}}
\newcommand{\M}{\mathcal{M}}
\newcommand{\unit}{\mathbbm{1}}
\newcommand{\mc}[1]{\mathcal{#1}}
\newcommand{\md}[1]{\mathds{#1}}
\newcommand{\bc}[1]{\boldsymbol{\mathcal{#1}}}
\newcommand{\mbb}[1]{\mathbb{#1}}
\newcommand{\inner}[1]{\langle #1\rangle}
\newcommand{\ket}[1]{|#1\rangle}
\newcommand{\bra}[1]{\langle #1|}
\newcommand{\tr}{\mathrm{Tr}}
\newcommand{\tn}[1]{^{\otimes #1}}
\newcommand{\av}[1]{\left\lvert #1\right\rvert}
\newcommand{\ct}{^{\dagger}}
\newcommand{\pa}[1]{\left(#1\right)}
\newcommand{\hidden}[1]{}
\newcommand{\mat}[2]{\left(\begin{array}{#1} #2 \end{array}\right)}
\definecolor{darkblue}{RGB}{0,0,127} 
\definecolor{darkgreen}{RGB}{0,150,0}
\newcommand{\expect}{\md{E}}
\def\equationautorefname~#1\null{Eq.~(#1)\null}
\begin{document}
\title{\papertitle}

\author{Joel\ \surname{Wallman}}
\affiliation{Institute for Quantum Computing and Department of Applied 
Mathematics, University of Waterloo, Waterloo, Canada}
\author{Chris\ \surname{Granade}}
\author{Robin\ \surname{Harper}}
\author{Steven T.\ \surname{Flammia}}
\affiliation{Centre for Engineered Quantum Systems, School of Physics, The 
University of Sydney, Sydney, Australia}

\date{\today}

\begin{abstract}
	Noise mechanisms in quantum systems can be broadly characterized as either 
	coherent (i.e., unitary) or incoherent. For a given fixed average error 
	rate, coherent noise mechanisms will generally lead to a larger worst-case 
	error than incoherent noise. We show that the coherence of a noise source can be 
	quantified by the \emph{unitarity}, which we relate to the 
	average change in purity averaged over input pure states. We then show that the 
	unitarity can be efficiently estimated using a protocol based on 
	randomized benchmarking that is efficient and robust to state-preparation 
	and measurement errors. We also show that the unitarity provides a lower 
	bound on the optimal achievable gate infidelity under a given noisy process.
\end{abstract}

\maketitle


To harness the advantages of quantum information processing, quantum 
systems have to be controlled to within some maximum threshold error. 
Certifying whether the error is below the threshold is possible by performing 
full quantum process tomography~\cite{Chuang1997, Poyatos1997},  
however, quantum process tomography is both inefficient in the number of qubits 
and is sensitive to state-preparation and measurement errors 
(SPAM)~\cite{Merkel2013}.

Randomized benchmarking~\cite{Emerson2005, Knill2008, Dankert2009, Lopez2010, 
Magesan2011, Magesan2012a} and direct fidelity estimation~\cite{Flammia2011, 
da-Silva2011} have been developed as efficient methods for estimating the 
average infidelity of noise to the identity. However, the worst-case error, as 
quantified by the diamond distance from the identity, can be more relevant to 
determining whether an experimental implementation is at the threshold for 
fault-tolerant quantum computation~\cite{Kitaev1997}. The best possible bound 
on the worst-case error (without further information about the noise) scales as 
the square root of the infidelity and can be orders of magnitude greater than 
the reported average infidelity~\cite{Wallman2014,Sanders2015}. 

However, this scaling of the worst-case bound is only known to be saturated by 
unitary noise. 
If the noise is known to be stochastic Pauli noise, the worst-case error is 
directly proportional to the average infidelity~\cite{Magesan2012a}, vastly 
improving on the general bound. Consequently, quantifying the intermediate 
regime between unitary and fully incoherent noise may allow the bound on the
worst-case error to be substantially improved.

Randomized benchmarking is also emerging as a useful tool 
for diagnosing the noise in an experiment~\cite{OMalley2014, Wallman2014a}, 
which can then be used to optimize the implementation of gates by varying the 
experimental design. In this spirit, an experimental protocol for 
characterizing the coherence of a noise channel will be an important tool as 
the quest to build a fault-tolerant quantum computer progresses.

In this paper, we present a protocol for estimating a particular quantification 
of the coherence of noise, which we term the \emph{unitarity}, in the 
experimental implementation of a unitary 2-design. Our protocol is efficient 
and robust against SPAM, and is a minor modification of randomized 
benchmarking. The unitarity is defined as the average change in the 
purity of a pure state after applying the noise channel, with the contributions 
due to the identity component subtracted off, (see \autoref{def:unitarity}) and 
is closely related to the purity of the Jamio{\l}kowski isomorphic state (see 
Proposition \ref{prop:Jpure}). We show that the unitarity is invariant under unitary 
gates and attains its maximal value if and only if the noise is unitary. 
Furthermore, we show that the unitarity can be combined with the average gate 
fidelity to quantify how far a noise channel is from depolarizing noise. 
Finally, we show that the unitarity of a noise channel provides a lower bound 
on the best achievable gate infidelity assuming perfect unitary control.

Our approach to quantifying coherence complements other recent work on 
quantifying coherence since we focus on the coherence of quantum 
\textit{operations} rather than the coherence of quantum \textit{states} 
relative to a preferred basis~\cite{Baumgratz2014}.

\section{Defining Unitarity}\label{sec:unitarity}

We begin by defining the unitarity of a noise channel 
$\mc{E}:\mc{B}(\mbb{C}^d)\to\mc{B}(\mbb{C}^d)$, that is, a completely positive 
(CP) linear map that takes quantum states to quantum states. The purity of a 
quantum state $\rho$ is $\tr\rho\ct\rho\in[0,1]$ with $\tr\rho\ct\rho=1$ if and 
only if $\rho$ is a pure state. An initial candidate for a definition of the 
unitarity of $\mc{E}$ is
\begin{align}
\int {\rm d}\psi \tr[\mc{E}(\psi)\ct \mc{E}(\psi)],
\end{align}
that is, as the purity of the output states averaged over all pure state 
inputs. However, this definition is problematic, since it would lead to the 
nonunital state-preparation channel
\begin{align}\label{eq:E0}
\mc{E}_0(\rho) = \tr(\rho)\ket{0}\bra{0}
\end{align}
having the same value of unitarity as a unitary channel, even though it does 
not preserve coherent superpositions. Similarly, the (trace-decreasing) 
filtering channel
\begin{align}\label{eq:E1}
\mc{E}_1(\rho) = \ket{0}\bra{0}\rho\ket{0}\bra{0}
\end{align}
does not preserve coherent superpositions and so should have the same unitarity 
value as a complete depolarizing channel. Both of these problematic channels 
arise when either the identity is mapped to coherent terms or \emph{vice 
versa}. 

To avoid these issues, we define the unitarity of a noise channel to be the 
average purity of output states, with the identity components subtracted, 
averaged over all pure states. That is, we define
\begin{align}\label{def:unitarity}
u(\mc{E}) = \frac{d}{d-1}\int {\rm d}\psi \tr\bigl[\mc{E}'(\psi)\ct 
\mc{E}'(\psi)\bigr],
\end{align}
where the normalization factor is chosen so that $u(\mc{I})=1$ and $\mc{E}'$ is 
defined so that $\mc{E}'(A)=\mc{E}(A)-[\tr \mc{E}(A)/\sqrt{d}]\unit$ for all 
traceless 
$A$ (to account for trace-decreasing channels, such as in \autoref{eq:E0}) 
and $\mc{E}'(\unit_d)=0$ (to account for non-unital channels, such as in 
\autoref{eq:E1}). Equivalently, if $\{A_2,\ldots,A_{d^2}\}$ is any set of 
traceless and trace-orthonormal operators and with $A_1 = \unit/d$ 
(e.g., the normalized Paulis), then we can 
define the generalized Bloch vector ${\bf{n}}(\rho)$ of a density operator $\rho$ 
with unit trace to be the vector of $d^2-1$ expansion coefficients
\begin{align}
\rho = \unit_d/d + \sum_{k>1} n_k A_k\,.
\end{align}
Our definition of the unitarity is then equivalent to
\begin{align}
u(\mc{E}) = \frac{d}{d-1}\int {\rm d}\psi \bigl\lVert {\bf{n}}[\mc{E}(\psi)] 
- {\bf{n}}[\mc{E}(\unit_d/d)] \bigr\rVert^2,
\end{align}
that is, the average squared length (i.e., Euclidean norm) of the generalized 
Bloch vector after applying the map $\mc{E}$ with the component due to the 
identity subtracted off.

\section{The Estimation Protocol}\label{sec:protocol}

We now present a protocol for characterizing the unitarity of the 
noise in an experimental implementation of a unitary 2-design $\mc{G}$ under 
the assumption that the experimental implementation of any $U\in\mc{G}$ can be 
written as $\mc{U}\circ \mc{E}$ where $\mc{U}$ denotes the channel 
corresponding to conjugation by $U$ and $\mc{E}$ is a completely positive, 
trace-preserving (CPTP) channel independent of $U$. (Note that, as in all 
randomized benchmarking papers, the assumption that $\mc{E}$ is independent of 
$\mc{U}$ can be relaxed without dramatically effecting the 
results~\cite{Magesan2011, Wallman2014, Wallman2014a}.)

The protocol is to repeat many independent trials of the following.
\begin{itemize}
	\item Choose a sequence ${\bf{j}} = (j_1,\ldots,j_m)$ of $m$ integers in 
	$\mathbb{N}_{\av{\mc{G}}} = \{1,\ldots,\av{\mc{G}}\}$ uniformly at random.
	\item Estimate the expectation value $Q_{\bf{j}}$ of an operator $Q$ 
	after preparing the state $\rho$ and applying the sequence 
	$U_{\bf{j}}=U_{j_m}U_{j_{m-1}}\ldots U_{j_1}$ of operators. In the
	ideal case that $\mc{E} = \mc{I}$, the expectation value is given by 
	\begin{align}
		 Q_{\bf{j}} = \tr (Q U^{\vphantom{\dagger}}_{\bf{j}} \rho U_{\bf{j}}\ct).
	\end{align}
\end{itemize}

We will show in \autoref{sec:vegetables}, \autoref{thm:main}
that, under the above assumptions on the noise, 
the expected value of $Q^2_{\bf{j}}$ over all random sequences 
${\bf{j}}$ obeys
\begin{align}\label{eq:TP_decay}
	\md{E}_{\bf{j}}[Q_{\bf{j}}^2] = A + B u(\mc{E})^{m-1}
\end{align}
for trace-preserving noise, where $A$ and $B$ are constants incorporating SPAM 
and the nonunitality of the noise and $u(\mc{E})\in[0,1]$ is the unitarity of 
the noise defined in \autoref{def:unitarity}, with $u(\mc{E})=1$ if and only 
if $\mc{E}$ is unitary. 

Therefore estimating $\md{E}_{\bf{j}}[Q_{\bf{j}}^2]$ for multiple values of $m$
using the above protocol and fitting to \autoref{eq:TP_decay} gives an 
efficient and robust estimator of the unitarity.

\subsection{Estimators}

Note that, as opposed to standard presentations of randomized benchmarking, we
are considering the expectation of an operator $Q$ rather than the
probability of a single  outcome. Though these two descriptions are mathematically equivalent,
by presenting in terms of observables, we more easily take expectations over
multiple observables. For example, we can average over the non-identity Pauli operators,
while keeping the sequence the same. As will be discussed
in \autoref{sec:vegetables}, this allows us to simulate a two-state measurement
involving $S$, the SWAP operator.
We term this the \emph{purity measurement}, as it estimates the relevant 
state-dependent term in
\begin{equation}
	\label{eq:original_purity}
	\tr(\rho_{\mathbf{j}}^2) = \frac{1}{d}+\|\bf{n}(\rho_{\mathbf{j}})\|^2 \,,
\end{equation}
the purity of the state $\rho_{\mathbf{j}}$ produced by the sequence $\mathbf{j}$.
What we actually use is a shifted and rescaled version of this defined by
\begin{align}
	\label{eq:purity}
	P_{\mathbf{j}} = \frac{d}{d-1}\|\bf{n}(\rho_{\mathbf{j}})\|^2 \,,
\end{align}
which for physical states is always in the interval $[0,1]$. For a single qubit, 
and measuring in the Pauli basis, this quantity is just 
$P_{\mathbf{j}} = \langle X\rangle^2+\langle Y\rangle^2+\langle Z\rangle^2$,
where each expectation value is taken with respect to the state $\rho_{\mathbf{j}}$.

The purity measurement can be performed in one of two ways. The direct way 
involves using two copies of the experiment (with the same sequence) that are 
run in parallel and a SWAP gate applied immediately prior to measurement. 
A method using only one copy makes use of the expansion
\begin{align}
S = \sum_k A_k\otimes A_k\ct
\end{align}
for any orthonormal operator basis $\{A_k\}$ (e.g., the normalized Paulis) by adding up 
the expectation values over measurements in the operator basis for the same 
sequences, that is, by estimating $\sum_k \md{E}_{\mathbf{j}}[(A_k)_{\bf j}^2]$.

Implementing the purity measurement using this averaging reduces the between-sequence
contribution to the uncertainty in our estimates of $\expect_{\mathbf{j}}[Q_{\mathbf{j}}^2]$,
since if the noise is
approximately unitary, then the final state will be relatively pure but will
generically overlap with all non-identity Paulis.
We note, however, that the above summation over a trace-orthonormal basis is
not scalable with the number of qubits, since there are exponentially many
$n$-qubit Paulis. We leave possible optimizations and an analysis of the scalable 
two-copy protocol as an open problem.

Also note that unlike in standard randomized benchmarking, we do not require 
the unitary 2-design to be a group since we do not require an inverse operation, 
or even that the set $\mc{G}$ is closed under composition.

\subsection{Trace-decreasing noise}

More generally, some experimental noise $\mc{E}$ may be trace-decreasing 
with an average survival rate
\begin{align}
S(\mc{E}) = \int \mathrm{d}\psi \tr[\mathcal{E}(\psi)],
\end{align}
which is the amount of the trace of the quantum state $\psi$ that survives the 
error channel $\mc{E}$, averaged over the Haar measure $\mathrm{d}\psi$. 
When $\mc{E}$ is itself the average noise over $\mathcal{G}$, the average 
loss rate can be estimated by
\begin{align}\label{eq:leakage}
\md{E}_{\bf{j}}[Q_{\bf{j}}] = C S(\mc{E})^{m-1}
\end{align}
where $C$ is a constant determined by SPAM~\cite{Wallman2014a}.

For trace-decreasing noise, the standard decay curve in 
\autoref{eq:TP_decay} can be generalized to
\begin{align}\label{eq:TD_decay}
\md{E}_{\bf{j}}[Q_{\bf{j}}^2] &=  A\lambda_+^{m-1} + B \lambda_-^{m-1},
\end{align}
for some constants $A$ and $B$ where
\begin{align}\label{eq:eigenvalue_sum}
\lambda_+ + \lambda_- = S(\mc{E})^2 + u(\mc{E}).
\end{align}

The above protocol is a variation of standard randomized benchmarking 
experiments, and is very similar to the protocol for estimating loss presented 
in Ref.~\cite{Wallman2014a}. In particular, one estimates an exponential decay 
rate in an exactly analogous manner (see \autoref{eq:TP_decay}) and the 
result is obtained in a manner that is robust to SPAM. 

However, there are three small but crucial differences to the experimental 
protocol presented in Ref.~\cite{Wallman2014a}, leading to significant 
differences in the analysis and interpretation of the decay curves. Most 
importantly, the post-processing is different, since in the present 
paper the survival probabilities for the individual sequences are squared 
before they are averaged. Secondly, the preparation and measurement 
procedures in the loss protocol of Ref.~\cite{Wallman2014a} are 
ideally the maximally mixed state and the trivial (identity) measurement respectively,
which is out performed in this protocol by the use of the purity measurement.
Finally, the 
current protocol requires a unitary 2-design, whereas the loss 
protocol only requires a unitary 1-design (although it also works for a unitary 
2-design).

\section{Numerical Simulations}\label{sec:numerics}

We now illustrate our model and our experimental protocol by numerically simulating each for
a variety of single-qubit noise models.
In \autoref{fig:constant_unitary}, we give an example of the correctness of our model \autoref{eq:TP_decay}
by showing that it agrees with simulated experimental data in the extreme case that the 
error channel is a fixed unitary. In \autoref{fig:nonunital_noise}, we demonstrate the utility 
of \autoref{eq:TP_decay} as an estimation model by estimating $u(\mathcal{E})$ from 
simulated data drawn according to our protocol. We simulate measurement error on each 
measurement with small independent random orthogonal matrices, scaling the unital 
components with a random factor between 0.95 and 1.0. In both these simulations we 
simulate SPAM on the prepared state by applying a random near-identity unitary. We 
choose $\mc{G}$ to be the single-qubit Clifford group.

\begin{figure}
\centering
\newlength\figureheight
\newlength\figurewidth
\setlength\figureheight{6cm}
\setlength\figurewidth{7cm}
\includegraphics[width=\figurewidth,height=\figureheight]{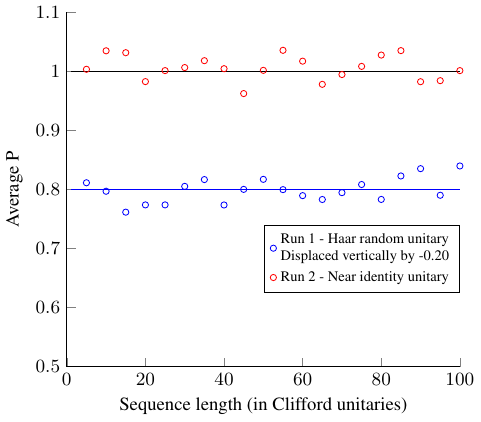}
	\caption{(Color online)
	Comparison of our model (\autoref{eq:TP_decay}) with simulated data,
	where the error channel is taken to be a fixed unitary drawn
	from the Haar measure (blue), or given by a rotation of $0.1$ about $\sigma_x$ (red).
	In the simulated data, the purity is measured following the formula of \autoref{eq:purity}.
	The solid lines show the model \autoref{eq:TP_decay} for the calculated values of
	$A$, $B$ and $u(\mc{E})$ for each scenario. This shows that even for extreme 
	values of unitary noise, our model correctly predicts the unitarity. 
	}\label{fig:constant_unitary}
\end{figure}

Concretely, in \autoref{fig:constant_unitary} we show two runs. In the first we set 
$\mc{E}$ to be some fixed (systematic) unitary chosen randomly according to the 
Haar measure (\emph{Haar-random unitary}) and some near-identity unitary 
represented by a rotation of $0.1$ radians around the $X$-axis of the Bloch sphere 
(\emph{near-identity unitary}). As \autoref{fig:constant_unitary} demonstrates, our model is 
in indeed insensitive to unitary noise. 

In \autoref{fig:nonunital_noise}, we show different types of unital noise 
composed with the nonunital amplitude-damping channel
\begin{align}\label{eq:nonunital_channel}
\mc{E}_{\rm{d}}(\rho) = p\ket{0}\bra{0} + (1-p)\rho
\end{align}
to simulate relaxation to a ground state. The particular unital channels we 
consider are a Haar-random unitary and a gate-dependent 
noise channel corresponding to choosing a fixed perturbation of the eigenvalues 
of a unitary $g$ by $e^{i\epsilon}$ to simulate over/under-rotation errors, where the 
perturbations $\epsilon$ are chosen independently and uniformly from $[-0.1,0.1]$ 
radians for each gate (\emph{rotation channel}). 

\begin{figure}
\centering
\setlength\figureheight{6cm}
\setlength\figurewidth{7cm}
\includegraphics[width=\figurewidth,height=\figureheight]{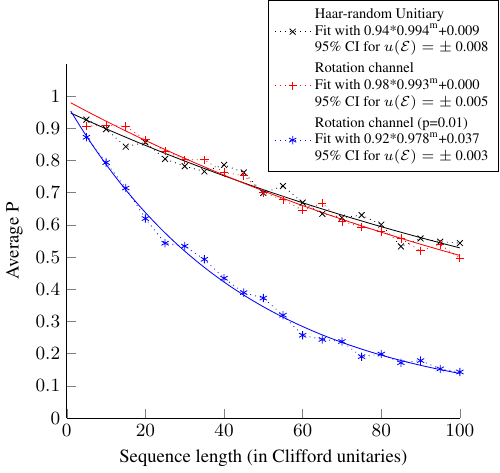}
	\caption{(Color online)
		Numerical purity decay curve for our protocol 
		when the noise consists of the non-unital channel in
		\autoref{eq:nonunital_channel} with $p=0.003$ composed with: a Haar-random unitary (black 
		crosses), and fixed gate-dependent over/under-rotations where the eigenvalues 
		for each gate are perturbed by $e^{\pm i\delta}$ for some 
		$\delta\in[-0.01,0.01]$ chosen independently and uniformly (red pluses).
		The purity is measured in the same manner as in 
		\autoref{fig:constant_unitary}.
		The third 
		plot (blue stars) shows (gate-dependent) over/under-rotations 
		composed with the 
		nonunital channel with p=$0.01$. The solid lines give the fit to 
		\autoref{eq:TP_decay}, where the slope gives an estimate for $u(\mc{E})$ 
		given by the values $0.994$, $0.993$ and $0.978$, 
		consistent with the theoretical values of 
		$0.994$, $0.994$ and $0.981$, respectively.
		Confidence intervals for $u(\mc{E})$ were calculated assuming Gaussian noise, such that the least-squares fit is a maximum likelihood estimator.
	}
	\label{fig:nonunital_noise}
\end{figure}

Note that the statistical fluctuations in Figs.~\ref{fig:constant_unitary} and \ref{fig:nonunital_noise} arise from 
between-sequence variations and within-sequence variations. The 
between-sequence variations arise from sampling a 
small number of random sequences (30 sequences in this case) relative to the total number. The between-sequence variations are minimised by measuring an observable for the purity. A perturbation 
expansion of the form $\bc{E} = \bc{I} - r\bs{\delta}$ (where $r$ is the 
average gate 
infidelity of $\bc{E}$ to the identity) together with appropriate bounds on the 
diamond norm can be used to bound these fluctuations and show that they must 
decrease with gate infidelity, as in Ref.~\cite{Wallman2014}. However, 
a more 
detailed analysis is complicated by the complexity of the relevant 
representation theory (that is, four-fold tensor products). The within-sequence 
variations arise from the need to estimate the expectation values of the 
observables. For the purpose of Figs.~\ref{fig:constant_unitary} and \ref{fig:nonunital_noise}, we used an unbiased estimator of 
the squared expectation values, simulating $N$ measurements (with $N$ set to 150).

\begin{figure}
\centering
\includegraphics[width=\columnwidth]{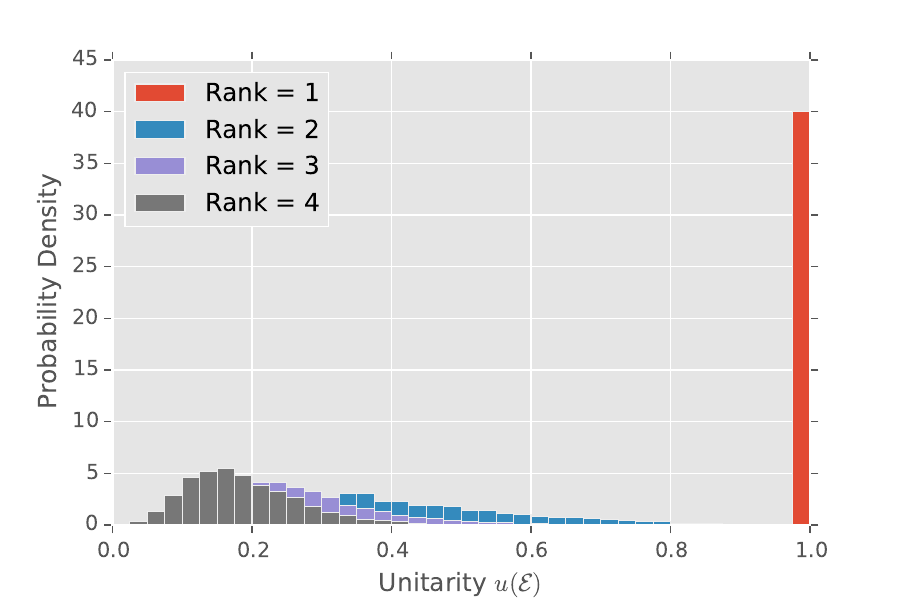}
	\caption{(Color online)
		Unitarity of single-qubit CPTP channels chosen according to the random 
		distributions of Bruzda \emph{et al}.~\cite{bruzda_random_2009} with
		varying ranks of the Kraus operators, demonstrating that the unitarity
		carries information about the structure of the channel. In particular,
		channels which require more Kraus operators to specify tend towards
		much smaller unitarity.
	}\label{fig:unitarity_v_rank}
\end{figure}

Finally, we consider the unitarity of random channels drawn from the random 
ensemble
of Bruzda \emph{et al}.~\cite{bruzda_random_2009}, using the QuTiP software 
package~\cite{johansson_qutip_2013}
to draw channels and compute their unitarity (see Supplemental Material).
As shown in \autoref{fig:unitarity_v_rank},
the distribution of unitarities depends strongly on the Kraus rank of the random channel. 
Moreover, as demonstrated in \autoref{fig:unitarity_v_agf}, this information is 
correlated with, but distinct from, the average gate fidelity.

\begin{figure}
\centering
\includegraphics[width=\columnwidth]{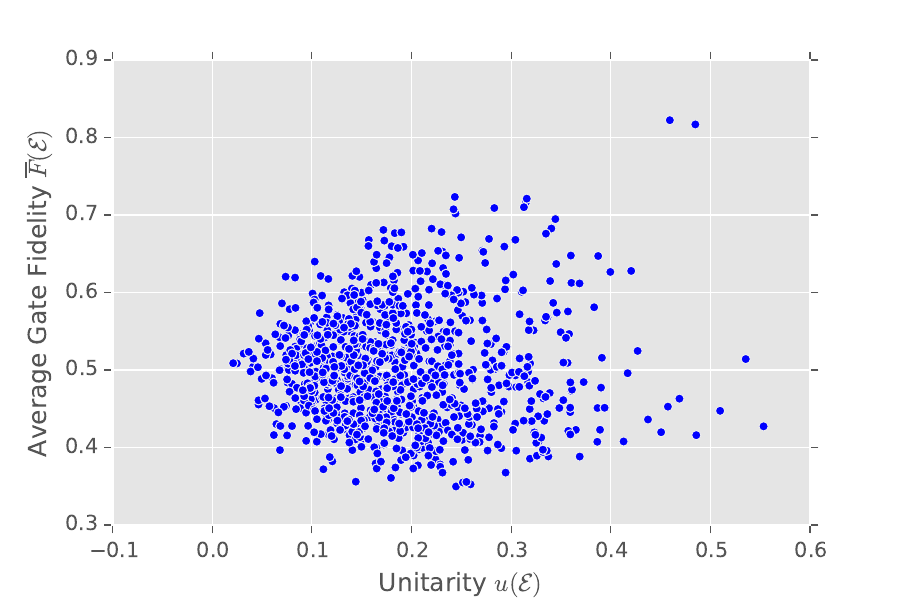}
	\caption{(Color online)
		Unitarity of single-qubit CPTP channels chosen according to the random 
		distributions of Bruzda \emph{et al}.~\cite{bruzda_random_2009},
		plotted versus their fidelities to the identity 
		channel. This example shows that even though the two quantities are
		correlated, they are not redundant and give different insight into the
		structure of the noise.
	}\label{fig:unitarity_v_agf}
\end{figure}

\section{Derivation of the Fit Models}\label{sec:vegetables}

We now derive the decay curve in \autoref{eq:TD_decay} for 
trace-non-increasing noise and show how the decay curve in \autoref{eq:TP_decay} 
emerges as a special case for trace-preserving noise.

Since we are dealing with sequences of channels, it will be convenient to work 
in the Liouville representation. Since a quantum channel is a linear map 
between finite-dimensional 
vector spaces, it is always possible to represent it as a matrix acting on 
basis coefficients in some given bases for the vector spaces.

In order to construct the Liouville representation of channels, let $\mc{A} = 
\{A_1,\ldots,A_{d^2}\}$ be an orthonormal basis of $\mbb{C}^{d\times d}$ 
according to the Hilbert-Schmidt inner product $\inner{A,B}=\tr A\ct B$. Any 
density matrix $\rho$ can be expanded as $\rho = \sum_{k\in\mbb{N}_{d^2}} 
\inner{A_k,\rho}A_k$ and so we can identify $\rho$ with a column vector 
$|\rho)\in\mbb{C}^{d^2}$ whose $k$th entry is $\inner{A_k,\rho}$. The Liouville 
representation of a channel $\mc{E}$ is then the unique matrix 
$\bs{\E}\in\mbb{C}^{d^2\times d^2}$ such that $\bs{\E}|\rho) = |\E[\rho])$, 
which has entries $\bc{E}_{kl}=\inner{A_k,\mc{E}(A_l)} = (A_k|\bc{E}|A_l)$. An 
immediate consequence of the uniqueness of $\bc{E}$ is that the composition of 	
abstract maps is represented in the Liouville representation by matrix 
multiplication.

The Liouville representation of unitary channels forms a unitary projective 
representation of the unitary group $U(d)$. When we wish to emphasize the 
Liouville representation as a formal representation (rep) of the unitary group 
$U(d)$ (or subgroups thereof), we will use the notation $\phi_L(U)$ instead of 
$\bc{U}$. With this notation, it is easy to verify that $\phi_L$ is indeed a 
unitary representation of $U(d)$, since the Liouville representation of 
composition is matrix multiplication and it can easily be verified that 
$\phi_L(U\ct) =\phi_L(U)\ct$.

Any representation $\phi$ of a semisimple group $\mc{G}$ [such as $SU(d)$] over 
a vector space $V$ can be unitarily decomposed into a direct sum of irreducible 
representations (irreps) $\bigoplus_l \phi_l\otimes \unit_{n_l}$, where the $l$ 
label the irreps and the $n_l$ are the corresponding multiplicities and a rep 
$\phi$ over a vector space $V$ is irreducible if there are no nontrivial 
subspaces of $V$ that are invariant under the action of $\phi$. A particularly 
important irrep for this paper is the trivial irrep $\phi_T$ such that 
$\phi_T(g)=1$ for all $g\in\mc{G}$. 

In the Liouville representation, vectors $b\in\mbb{C}^{d^2}$ are in one-to-one 
correspondence with operators $B\in\mbb{C}^{d\times d}$, so invariant (vector) 
subspaces under the Liouville representation can be identified with operator 
subspaces that are invariant under conjugation in the canonical (i.e., $d\times 
d$ matrix) representation. In particular, the identity operator $\unit$ is 
invariant under conjugation by any unitary, so $|\unit)$ is an invariant 
subspace of the Liouville representation corresponding to a trivial irrep. 
We now fix $A_1 = \unit/\sqrt{d}$ (so that $\inner{A_1,A_1}=1$), so that 
the Liouville representation of any unitary $\mc{U}$ is 
\begin{align}\label{eq:single_rep}
\phi_L(U) = 1\oplus \phi_{\rm u}(U) 
\end{align}
where $\oplus$ denotes the matrix direct sum and we refer to $\phi_{\rm u}(U)$ 
as the \textit{unital irrep}, which has dimension $d^2-1$. Furthermore, any CP 
channel $\E$ can be written in a corresponding block form as
\begin{align}\label{eq:block_terms}
\bs{\E} = \mat{cc}{S(\mc{E})  & \bs{\E}_{\rm sdl} \\ \bs{\E}_{\rm n} 
& \bs{\E}_{\rm u}},
\end{align}
where we refer to $\bc{E}_{\rm sdl}$, $\bs{\E}_{\rm n}$ and 
$\bs{\E}_{\rm u}$ as the \textit{state-dependent leakage}, \textit{nonunital} 
and \textit{unital} blocks respectively. We now show how $\bc{E}_{\rm u}$ is 
related to the definition of the unitarity in \autoref{def:unitarity}.

\begin{prop}\label{prop:unitarity_relation}
The unitarity of a channel $\mc{E}$ is
\begin{align}
u(\mc{E}) = \frac{1}{d^2-1}\tr \bc{E}_{\rm u}\ct\bc{E}^{\vphantom{\ct}}_{\rm u}
\end{align}
\end{prop}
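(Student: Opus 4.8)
The plan is to evaluate the defining integral in \autoref{def:unitarity} directly by expressing everything in the Liouville representation and using the known value of the twirl over pure states (equivalently, the average over a unitary 2-design). First I would write the integrand $\tr[\mc{E}'(\psi)\ct\mc{E}'(\psi)]$ as $(\mc{E}'[\psi]|\mc{E}'[\psi])$, the Hilbert-Schmidt norm-squared of the vectorized output. Since $\psi$ is a pure state with unit trace, its Liouville vector is $|\psi) = |A_1)/\sqrt{d} + |\psi_{\rm u})$ where $|\psi_{\rm u})$ lives in the unital block (the $(d^2-1)$-dimensional orthogonal complement of $|A_1)$). The key point is that $\mc{E}'$ is constructed precisely so that it kills the identity component of the input and the identity component of the output: in block form with respect to the splitting $\mathbb{C}\oplus\mathbb{C}^{d^2-1}$ induced by $A_1 = \unit/\sqrt d$, the operator $\mc{E}'$ has only its lower-right block nonzero, and that block is exactly $\bc{E}_{\rm u}$. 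Hence $|\mc{E}'(\psi)) = \bc{E}_{\rm u}|\psi_{\rm u})$ and the integrand equals $(\psi_{\rm u}|\bc{E}_{\rm u}\ct\bc{E}_{\rm u}|\psi_{\rm u})$.

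Next I would carry out the average over $\psi$. The only $\psi$-dependence now sits in the rank-one projector $|\psi_{\rm u})(\psi_{\rm u}|$ acting on the $(d^2-1)$-dimensional unital space, so I need $\int \mathrm{d}\psi\, |\psi_{\rm u})(\psi_{\rm u}|$. This is a standard computation: $\int \mathrm{d}\psi\, |\psi)(\psi| = \frac{1}{d(d+1)}(\unit_{d^2} + \text{(swap-type term)})$ in the full space, or more simply one uses $\int\mathrm{d}\psi\, n_k[\psi] n_l[\psi] = c\,\delta_{kl}$ for some constant $c$ by Schur's lemma applied to the unital irrep $\phi_{\rm u}$ (which is irreducible for a 2-design since the second tensor power of the defining rep has the right multiplicity structure). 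The constant $c$ is then fixed by taking the trace: $\sum_{k>1}\int\mathrm{d}\psi\, n_k[\psi]^2 = \int\mathrm{d}\psi\,\|\bf{n}(\psi)\|^2 = \int\mathrm{d}\psi\,(\tr\psi^2 - 1/d) = 1 - 1/d$ using \autoref{eq:original_purity} and the purity of $\psi$. Dividing by the $d^2-1$ diagonal entries gives $c = \frac{d-1}{d(d^2-1)}$, so $\int\mathrm{d}\psi\,|\psi_{\rm u})(\psi_{\rm u}| = \frac{d-1}{d(d^2-1)}\,\unit_{d^2-1}$.

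Putting the pieces together, $\int\mathrm{d}\psi\,\tr[\mc{E}'(\psi)\ct\mc{E}'(\psi)] = \tr\bigl[\bc{E}_{\rm u}\ct\bc{E}_{\rm u}\int\mathrm{d}\psi\,|\psi_{\rm u})(\psi_{\rm u}|\bigr] = \frac{d-1}{d(d^2-1)}\tr\bc{E}_{\rm u}\ct\bc{E}_{\rm u}$, and multiplying by the normalization $d/(d-1)$ in \autoref{def:unitarity} yields $u(\mc{E}) = \frac{1}{d^2-1}\tr\bc{E}_{\rm u}\ct\bc{E}_{\rm u}$ as claimed. I would also note that this immediately reconciles with the Bloch-vector form of the definition, since $\|\bf{n}[\mc{E}(\psi)] - \bf{n}[\mc{E}(\unit/d)]\|^2 = \|\bc{E}_{\rm u}\,\bf{n}(\psi)\|^2$, the nonunital contribution $\bc{E}_{\rm n}$ being exactly what gets subtracted by the $-\bf{n}[\mc{E}(\unit/d)]$ term.

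The main obstacle is not any single hard estimate but making sure the bookkeeping around $\mc{E}'$ is airtight: one must check that the prescription "$\mc{E}'(A) = \mc{E}(A) - [\tr\mc{E}(A)/\sqrt d]\unit$ for traceless $A$ and $\mc{E}'(\unit) = 0$" really does correspond, in the Liouville picture, to zeroing out the first row and first column of $\bs{\E}$ and leaving $\bc{E}_{\rm u}$ untouched — in particular that the output-identity subtraction removes the state-dependent leakage row $\bc{E}_{\rm sdl}$ and the input-identity annihilation removes the nonunital column $\bc{E}_{\rm n}$ and the survival entry $S(\mc{E})$. Once that identification is in hand, the rest is the routine 2-design average above.
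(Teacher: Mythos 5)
Your proposal is correct and follows essentially the same route as the paper: both identify $\mc{E}'$ with $P_{\rm u}\bc{E}P_{\rm u}$, reduce the integral to $\tr[\bc{E}_{\rm u}\ct\bc{E}_{\rm u}\,\mc{O}]$ with $\mc{O}=\int\mathrm{d}\psi\,|\psi)(\psi|$, and use Schur's lemma on the (irreducible) unital block to conclude that the restriction of $\mc{O}$ there is $\tfrac{1}{d(d+1)}\unit_{d^2-1}$ — you fix the constant from the purity of pure states, the paper from $\tr\mc{O}=1$ together with $\lambda_T=1/d$, which is the same computation. No gaps.
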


\begin{proof}
For any operator $A$, $\tr A\ct A = (A|A)$ and $\bc{E}'=P_{\rm u}\bc{E}P_{\rm 
u}$ where $P_{\rm u}$ is the projector onto the unital irrep, so 
\autoref{def:unitarity} can be rewritten as
\begin{align}
u(\mc{E}) &= \frac{d}{d-1}\int {\rm d}\psi 
(\psi|P_{\rm u}\bc{E}\ct P_{\rm u}\bc{E}P_{\rm u}|\psi)	\notag\\
& =  \frac{d}{d-1}\tr \bc{E}_{\rm u}\ct \bc{E}^{\vphantom{\ct}}_{\rm u} \mc{O}\,,
\end{align}
where $\mc{O} = \int {\rm d}\psi|\psi)(\psi|$, with the slight abuse of 
notation $\bc{E}_{\rm u} = P_{\rm u} \bc{E} P_{\rm u}$. 

Since $\mc{O}$ commutes with the action of the unitary group, Schur's lemma 
implies that it is a weighted sum of projectors onto the irreps of $\phi_L$,
\begin{align}
	\mc{O} = \lambda_T P_T + \lambda_{\rm u} P_{\rm u} \,.
\end{align}
The projector onto the trivial irrep is $P_T = |A_1)(A_1|$ and so 
\begin{align}
 \frac{\tr P_T \mc{O}}{\tr P_T} = \lambda_T = \int {\rm d}\psi \lvert(A_1|\psi)\rvert^2 = \frac1d.
\end{align}
Because $\tr \mc{O} = 1$ from the normalization of the Haar measure, 
we can solve for $\lambda_{\rm u}$ in the expression
\begin{align}
\tr \mc{O}  =  1 = \frac{1}{d} (1) + \lambda_{\rm u} (d^2-1) \,
\end{align} 
and we find $\lambda_{\rm u} = 1/d (d+1)$. Plugging this in and using $P_{\rm 
u}P_{T} = 0$ gives the final result.
\end{proof}

Before we derive the decay curve in \autoref{eq:TD_decay} using the 
expression 
for the unitarity from Proposition~\ref{prop:unitarity_relation}, let us first 
simplify the quantity of interest.  The expectation value of $Q$ given that 
the sequence $\bf{j}$ was applied is
\begin{align}
Q_{\bf{j}} = (Q|\bc{U}_{j_m}\bc{E}\ldots \bc{U}_{j_2}\bc{E}\bc{U}_{j_1}|\rho),
\end{align}
where a residual noise term has been absorbed into the experimental state 
preparation $\rho$. Noting that
\begin{align}
	Q_{\bf{j}}^2 = (Q\tn{2}|\bc{U}_{j_m}\tn{2}\bc{E}\tn{2}\ldots 
	\bs{\mc{U}}_{j_2}\tn{2}\bc{E}\tn{2}\bc{U}_{j_1}\tn{2}|\rho\tn{2}),
\end{align}
the expected average of the squares is
\begin{align}\label{eq:mean_squares}
\md{E}_{\bf{j}}[Q_{\bf{j}}^2] &= \av{\mc{G}}^{-m}\sum_{\bf{j}} Q_{\bf{j}}^2 \notag\\
&=(Q\tn{2}|\pa{\bc{U}\tn{2}_{\rm avg} \bc{E}\tn{2}\bc{U}\tn{2}_{\rm avg} }^{m-1}
|\rho\tn{2})\notag\\
&=(Q\tn{2}|\bs{\M}^{m-1}|\rho\tn{2}),
\end{align}
where $\bs{\mc{U}}\tn{2}_{\rm avg} = 
\av{\mc{G}}^{-1}\sum_{g\in\mc{G}}\bs{g}\tn{2}$, we define the averaged operator 
$\bs{\M} = \bc{U}\tn{2}_{\rm avg} \bc{E}\tn{2}\bc{U}\tn{2}_{\rm avg}$, and we 
have used the fact that
$\av{\G}^{-1} \sum_{g\in\G} \phi(g)$ is the projector onto the 
trivial subreps for any rep $\phi$ of a group $\mc{G}$ so that 
$\bc{U}\tn{2}_{\rm avg}=(\bc{U}\tn{2}_{\rm avg})^2$~\cite{Goodman2009}.
Thus, to derive the fit model we must first identify 
the trivial irreps of $\mc{G}$ in $\phi_L(U)\tn{2}$, since this is where $\bs{\M}$ is supported. 

\begin{prop}
The averaged operator $\bs{\M} = \bc{U}\tn{2}_{\rm avg} \bc{E}\tn{2}\bc{U}\tn{2}_{\rm avg}$ is supported 
on a two-dimensional subspace spanned by $|\unit_{d^2})$ and $|S)$, where $S$ is the 
SWAP operator.
\end{prop}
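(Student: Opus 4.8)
The plan is to use the defining property of a unitary $2$-design to turn the group average into a Haar average, and then to read off its range from Schur--Weyl duality. First I would observe that $\bc{U}\tn{2}_{\rm avg} = \av{\mc{G}}^{-1}\sum_{g\in\mc{G}}\phi_L(g)\tn{2}$ equals the Haar average $\int{\rm d}U\,\phi_L(U)\tn{2}$ over the full unitary group, because $\phi_L(g)\tn{2}$ is conjugation by $g\otimes g$ and hence (writing it out on a product operator basis) a polynomial of degree $2$ in the entries of $g$ and degree $2$ in the entries of $\bar g$ --- exactly the moments a $2$-design reproduces. Being a Haar average of a representation, $\bc{U}\tn{2}_{\rm avg}$ is then the orthogonal projector $P$ onto the subspace of vectors fixed by $\phi_L(U)\tn{2}$ for all $U$. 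Since $\bs{\M}=\bc{U}\tn{2}_{\rm avg}\,\bc{E}\tn{2}\,\bc{U}\tn{2}_{\rm avg}=P\,\bc{E}\tn{2}\,P$, it suffices to show that $\mathrm{range}(P)$ is two-dimensional with basis $|\unit_{d^2})$, $|S)$.

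Next I would identify that range. A vector of $\mbb{C}^{d^2}\otimes\mbb{C}^{d^2}$ is, in the Liouville picture, an operator on $\mbb{C}^d\otimes\mbb{C}^d$, and $\phi_L(U)\tn{2}$ acts on it by conjugation by $U\otimes U$; so $\mathrm{range}(P)$ is precisely the commutant of $\{U\otimes U:U\in U(d)\}$ in $\mc{B}(\mbb{C}^d\otimes\mbb{C}^d)$. By Schur--Weyl duality this commutant is the span of the image of the symmetric group $S_2$, namely $\mathrm{span}\{\unit_{d^2},S\}$, and $\unit_{d^2}$ and $S$ are linearly independent for $d\ge2$ (the case $d=1$ being trivial). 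Vectorising gives $\mathrm{range}(P)=\mathrm{span}\{|\unit_{d^2}),|S)\}$, which is the claim. As a cross-check, one could instead evaluate $\int{\rm d}U\,\phi_L(U)\tn{2}$ directly from the second-moment (Weingarten) formula for $U(d)$ and verify it is exactly this rank-two projector.

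I expect the only real obstacle to be the bookkeeping of tensor-factor orderings: one must check that $\phi_L(U)\tn{2}$ appearing in $\bs{\M}$ really is conjugation by $U\otimes U$ up to a fixed permutation of the four tensor legs of $U\otimes\bar U\otimes U\otimes\bar U$, so that both the $2$-design identity and the Schur--Weyl identification apply to the correct object. A secondary point worth stating explicitly is that restricting Schur--Weyl from $GL(d)$ to $U(d)$ (or $SU(d)$) does not enlarge this commutant, so the unitary setting still yields exactly $\mathrm{span}\{\unit_{d^2},S\}$. Once these are pinned down the argument is essentially immediate.
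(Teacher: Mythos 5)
Your argument is correct, but it reaches the conclusion by a genuinely different route than the paper. The paper works entirely with the finite group $\mc{G}$: it counts the trivial irreps of $\phi_L\tn{2}$ restricted to $\mc{G}$ using character orthogonality, obtaining multiplicity $2$ from the assumption that $\mc{G}$ acts irreducibly on the unital block (the character-theoretic guise of the $2$-design hypothesis), and then exhibits the two invariant vectors $B_1\propto\unit_{d^2}$ and $B_2\propto S-\unit_{d^2}/d$ and verifies their invariance by direct computation. You instead use the $2$-design hypothesis in its moment form to replace $\bc{U}\tn{2}_{\rm avg}$ by the Haar average over $U(d)$, identify that average as the orthogonal projector onto the fixed vectors of $\phi_L(U)\tn{2}$, and read off the fixed subspace as the commutant of $\{U\otimes U\}$ via Schur--Weyl duality. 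Both proofs invoke the $2$-design assumption at the same logical juncture, just packaged differently. Your route produces the invariant span directly rather than requiring one to guess candidate vectors and check them, and it makes explicit that $\bc{U}\tn{2}_{\rm avg}$ is literally the rank-two orthogonal projector onto $\mathrm{span}\{|\unit_{d^2}),|S)\}$; the price is the leg-ordering bookkeeping you correctly flag (that $\phi_L(U)\tn{2}$ is $U\tn{2}\otimes\bar{U}\tn{2}$ up to a fixed permutation of tensor factors) and the observation that restricting Schur--Weyl from $GL(d)$ to $U(d)$ does not enlarge the commutant. The paper's route avoids Schur--Weyl entirely and, as a byproduct, constructs the orthonormal basis $\{B_1,B_2\}$ that is reused throughout the subsequent propositions, so if you adopted your version you would still want to record that basis explicitly before proceeding.
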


\begin{proof}
Define $\chi_R(g) = \tr R(g)$ as the character of the rep $R$. 
Then we can use Schur's orthogonality relations to count the number of trivial 
irreps.
Let $\langle\!\langle \chi_R , \chi_{R'}\rangle\!\rangle =\av{\mc{G}}^{-1}\sum_{g\in\mc{G}} \chi^*_R(g) \chi_{R'}(g)$ 
denote the character inner product for $\mc{G}$. From the direct sum structure 
in \autoref{eq:single_rep}, the number of trivial irreps is 
\begin{align*}
\langle\!\langle \chi_1, \chi_{L\otimes L} \rangle\!\rangle = 
\langle\!\langle \chi_1, \chi_1+2 \chi_{\rm u} + \chi^2_{\rm u}\rangle\!\rangle =  
1 + 0 + \langle\!\langle \chi_1 , \chi^2_{\rm u}\rangle\!\rangle \,.
\end{align*}
Since $\chi_{\rm u}$ is real-valued, we have 
$\langle\!\langle \chi_1 , \chi^2_{\rm u}\rangle\!\rangle = 
\langle\!\langle \chi_{\rm u} , \chi_{\rm u}\rangle\!\rangle$. 
If $\mc{G}$ acts irreducibly on the unital block~\cite{Gross2007a}, then 
$\langle\!\langle \chi_{\rm u} , \chi_{\rm u}\rangle\!\rangle = 1$
and the number of trivial irreps is 2. 

The two trivial irreps in $\phi_L\tn{2}$ are spanned by the orthonormal vectors 
$|B_1)$ and $|B_2)$ where
\begin{align}\label{eq:Bbasis}
	B_1 = \unit_{d^2}/d \quad , \quad B_2 = (S-\unit_{d^2}/d)/\sqrt{d^2-1}\,,
\end{align}
and $S$ is the SWAP operator. To check this, note that
\begin{align*}
(B_j|\phi_L(U)\tn{2}|B_k) &= \tr\bigl(B_j\ct U\otimes U B_k U\ct\otimes U\ct\bigr) = \delta_{jk}\,,
\end{align*}
since both identity and SWAP are invariant under conjugation by $U\otimes U$ and 
$S^2 = \unit_{d^2}$. 
Since $\phi_L(U)\tn{2}$ is a unitary rep, $B_1$ 
and $B_2$ are the first two elements of a two-qudit orthonormal Schur basis
$\{B_j\}$ for $\phi_L\tn{2}$ and so correspond to trivial irreps.
Therefore $\bs{\M}$
is zero except for the $2\times 2$ submatrix supported on $|B_1)$ and $|B_2)$. 
These vectors have the same span as $|\unit_{d^2})$ and $|S)$.
\end{proof}

The next proposition characterizes the averaged operator on the supported subspace.
\begin{prop}\label{prop:norms}
In the invariant $B_i$ basis from \autoref{eq:Bbasis}, the averaged operator $\bs{\M}$
has the following matrix elements
\begin{itemize}
\item $\bs{\M}_{11} = S(\mc{E})^2$, 
\item $\bs{\M}_{12} = (d^2-1)^{-1/2}\|\bs{\E}_{\rm sdl}\|^2$, 
\item $\bs{\M}_{21} = (d^2-1)^{-1/2}\|\bs{\E}_{\rm n}\|^2$, and 
\item $\bs{\M}_{22} = (d^2-1)^{-1} \|\bs{\E}_{\rm u}\|_F^2 = u(\E)$\,.
\end{itemize}
\end{prop}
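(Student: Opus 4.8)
The plan is to reduce each entry of $\bs{\M}$ to an inner product in the doubled Liouville space and then evaluate it by expanding the SWAP operator in an orthonormal operator basis. Since $\bs{\M} = \bc{U}\tn{2}_{\rm avg}\,\bc{E}\tn{2}\,\bc{U}\tn{2}_{\rm avg}$ and, as established in the preceding proposition, $\bc{U}\tn{2}_{\rm avg}$ is the orthogonal projector onto $\mathrm{span}\{|B_1),|B_2)\}$, both $(B_j|$ and $|B_k)$ are left invariant by $\bc{U}\tn{2}_{\rm avg}$, so that $\bs{\M}_{jk} = (B_j|\bc{E}\tn{2}|B_k)$ for $j,k\in\{1,2\}$. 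It therefore suffices to compute these four numbers.

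First I would fix $\{A_k\}$ to be a Hermitian trace-orthonormal basis, e.g.\ the normalized Paulis. This is no loss of generality: $S(\mc{E})$ and the Frobenius norms of the blocks $\bs{\E}_{\rm sdl}$, $\bs{\E}_{\rm n}$, $\bs{\E}_{\rm u}$ in \autoref{eq:block_terms} are invariant under an orthonormal change of basis inside the unital block. With this choice one has $S = \sum_k A_k\otimes A_k$ and $\unit_{d^2}/d = A_1\otimes A_1$, so from \autoref{eq:Bbasis} we may write $|B_1) = |A_1\otimes A_1)$ and $|B_2) = (|S)-|B_1))/\sqrt{d^2-1}$. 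Expanding $\bs{\M}_{jk} = (B_j|\bc{E}\tn{2}|B_k)$ linearly then reduces everything to the four inner products $(B_1|\bc{E}\tn{2}|B_1)$, $(B_1|\bc{E}\tn{2}|S)$, $(S|\bc{E}\tn{2}|B_1)$, and $(S|\bc{E}\tn{2}|S)$.

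To evaluate these I would use $\bc{E}\tn{2}|A_j\otimes A_k) = \bc{E}|A_j)\otimes\bc{E}|A_k)$ together with the orthonormality of $\{A_k\}$, which turns each inner product into a sum of products of Liouville matrix elements $\bc{E}_{jk}=(A_j|\bc{E}|A_k)$: namely $(B_1|\bc{E}\tn{2}|B_1)=\bc{E}_{11}^2$, $(B_1|\bc{E}\tn{2}|S)=\sum_k\bc{E}_{1k}^2$, $(S|\bc{E}\tn{2}|B_1)=\sum_k\bc{E}_{k1}^2$, and $(S|\bc{E}\tn{2}|S)=\sum_{j,k}\bc{E}_{jk}^2$. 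Then I would read off the block structure \autoref{eq:block_terms}: $\bc{E}_{11}=S(\mc{E})$, $\sum_{k>1}\bc{E}_{1k}^2=\|\bs{\E}_{\rm sdl}\|^2$, $\sum_{k>1}\bc{E}_{k1}^2=\|\bs{\E}_{\rm n}\|^2$, and $\sum_{j,k>1}\bc{E}_{jk}^2=\|\bs{\E}_{\rm u}\|_F^2$. Substituting and simplifying yields $\bs{\M}_{11}=S(\mc{E})^2$, $\bs{\M}_{12}=\|\bs{\E}_{\rm sdl}\|^2/\sqrt{d^2-1}$, $\bs{\M}_{21}=\|\bs{\E}_{\rm n}\|^2/\sqrt{d^2-1}$, and $\bs{\M}_{22}=\|\bs{\E}_{\rm u}\|_F^2/(d^2-1)$; the last equals $u(\mc{E})$ by Proposition~\ref{prop:unitarity_relation}. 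I do not expect a genuine obstacle here: the calculation is elementary, and the only points demanding care are the bookkeeping in rewriting $|B_2)$ through $|S)$ and $|B_1)$, keeping the four blocks straight, and the basis-independence argument that justifies assuming a Hermitian basis.
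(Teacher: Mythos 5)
Your proposal is correct and follows essentially the same route as the paper: both reduce $\bs{\M}_{jk}$ to $(B_j|\bc{E}\tn{2}|B_k)$ using the invariance of the $B_i$, evaluate the four inner products involving $|\unit_{d^2})$ and $|S)$ via the expansion $S=\sum_k A_k\otimes A_k^\dagger$, and then take linear combinations to pass to the $B_i$ basis. The only difference is presentational — you track Liouville matrix elements $\bc{E}_{jk}$ in a Hermitian basis (implicitly using that these entries are real, the same fact the paper invokes as $\E(A^\dagger)=\E(A)^\dagger$), while the paper uses the trace identity $\tr[S(A\otimes B)]=\tr(AB)$ to get the Frobenius norms directly.
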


\begin{proof}
We will establish the matrix elements with respect to $|\unit_{d^2})$ and $|S)$, and the 
claims about the $B_i$ basis will follow by taking appropriate linear combinations. 

Because the $B_i$ basis is invariant, we can ignore the average unitary terms 
in $\bs{\M}$. 
We first find that
\begin{align}
	(\unit_{d^2}|\bc{E}\tn{2}|\unit_{d^2}) & = \tr \E\tn{2}(\unit_{d^2}) = 
	\tr \bigl(\E(\unit_{d})\tn{2}\bigr) \notag\\
	& = \bigl(\tr \E(\unit_{d})\bigr)^2 = d^2 S(\mc{E})^2\,.
\end{align}
Next we can use the identity 
$\langle S,A\otimes B\rangle = \tr\bigl[S (A\otimes B)\bigr] = \tr(AB)$ 
and the fact that $\E(A^\dagger) = \E(A)^\dagger$ to find
\begin{align}
	(S|\bc{E}\tn{2}|\unit_{d^2}) &= \tr\bigl[S \E(\unit_d)\tn{2}\bigr] = \tr\bigl[\E(\unit_d) \E(\unit_d)\bigr] \notag\\
	& = \tr\bigl[\E(\unit_d)^\dagger \E(\unit_d)\bigr] = \|\E(\unit_d)\|_F^2\,,
\end{align}
where $\|\cdot\|_F$ denotes the Frobenius norm. The expression for 
$(\unit_{d^2}|\bc{E}\tn{2}|S)$ follows similarly using the adjoint channel. 
Finally, we can use the expansion $S = \sum_k A_k \otimes A_k^\dagger$ for any 
orthonormal operator basis $A_k$ to obtain
\begin{align}\label{eq:S-EE-S}
	(S|\bc{E}\tn{2}|S) & = \tr\bigl[S \bc{E}\tn{2}(S)\bigr] 
	= \sum_k \tr\bigl[S \E(A_k)\otimes\E(A_k^\dagger)\bigr]\notag \\
	& = \sum_k \tr\bigl[\E(A_k)^\dagger \E(A_k)\bigr] = \|\bc{\E}\|_F^2\,.
\end{align}
The values of the matrix elements are then established by using the form of 
\autoref{eq:block_terms} and the definition of the $B_i$ basis from 
\autoref{eq:Bbasis} and taking various linear combinations.
\end{proof}

The final step in deriving the fit model is to analyze the eigenvalues and eigenvectors 
of the averaged operator.
\begin{prop}\label{prop:eigenvec}
The averaged operator $\bs{\M}$ has
two distinct nontrivial eigenvectors. 
\end{prop}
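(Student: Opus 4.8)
The plan is to turn this into a statement about a single $2\times2$ matrix. By the two preceding propositions, $\bs{\M}$ annihilates the orthogonal complement of $\mathrm{span}\{|B_1),|B_2)\}$ from \autoref{eq:Bbasis}, so every vector in that complement is a trivial eigenvector with eigenvalue $0$, and all of the nontrivial spectral data live in the $2\times2$ block, which by \autoref{prop:norms} is
\begin{align}
\bs{\M}|_{\{B_1,B_2\}} = \mat{cc}{S(\mc{E})^2 & \beta \\ \gamma & u(\mc{E})},
\end{align}
with $\beta = (d^2-1)^{-1/2}\|\bs{\E}_{\rm sdl}\|^2 \ge 0$ and $\gamma = (d^2-1)^{-1/2}\|\bs{\E}_{\rm n}\|^2 \ge 0$. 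So the whole proposition is a claim about this matrix.

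I would then compute its characteristic polynomial $\lambda^2 - [S(\mc{E})^2+u(\mc{E})]\lambda + [S(\mc{E})^2u(\mc{E}) - \beta\gamma]$, read off the eigenvalues
\begin{align}
\lambda_\pm = \tfrac12\Bigl(S(\mc{E})^2 + u(\mc{E}) \pm \sqrt{D}\Bigr), \qquad D := \bigl(S(\mc{E})^2 - u(\mc{E})\bigr)^2 + 4\beta\gamma,
\end{align}
and note that this already yields the relation $\lambda_+ + \lambda_- = S(\mc{E})^2 + u(\mc{E})$ promised in \autoref{eq:eigenvalue_sum}. Since $D$ is a sum of two nonnegative terms, $D \ge 0$, so $\lambda_\pm$ are real; and whenever $D > 0$ they are distinct, the block is diagonalizable, and its two eigenvectors can be written down explicitly, e.g.\ $|v_\pm) \propto \beta\,|B_1) + (\lambda_\pm - S(\mc{E})^2)\,|B_2)$ (or $(\lambda_\pm - u(\mc{E}))\,|B_1) + \gamma\,|B_2)$ when $\beta = 0$). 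Feeding $|v_\pm)$ back into $\md{E}_{\bf j}[Q_{\bf j}^2] = (Q\tn2|\bs{\M}^{m-1}|\rho\tn2)$ then produces the two-term decay \autoref{eq:TD_decay}, and specializing to trace-preserving noise, where $\beta = 0$ and $S(\mc{E}) = 1$, gives $\lambda_+ = 1$ and $\lambda_- = u(\mc{E})$ with eigenvector $|B_2)$, recovering \autoref{eq:TP_decay}.

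The one delicate point, and the step I expect to be the real obstacle, is the degenerate case $D = 0$, which requires simultaneously $S(\mc{E})^2 = u(\mc{E})$ and $\beta\gamma = 0$. If $\beta = \gamma = 0$ the block is $S(\mc{E})^2\,\unit_2$, which trivially has two independent eigenvectors. Otherwise exactly one off-diagonal entry vanishes and the other does not, which would make the block a genuine $2\times2$ Jordan block with only one eigenvector; I would rule this out. If $\bs{\E}_{\rm sdl} = 0$, then $\mc{E}$ has input-independent trace loss, so $\mc{E} = S(\mc{E})\tilde{\mc{E}}$ for a CPTP channel $\tilde{\mc{E}}$, whence $u(\mc{E}) = S(\mc{E})^2\,u(\tilde{\mc{E}})$ and the condition $S(\mc{E})^2 = u(\mc{E})$ forces $u(\tilde{\mc{E}}) = 1$; by the characterization of unitarity (\autoref{prop:unitarity_relation}, for CPTP channels) this makes $\tilde{\mc{E}}$ unitary, hence unital, hence $\bs{\E}_{\rm n} = 0$ as well. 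The symmetric subcase $\bs{\E}_{\rm n} = 0 \ne \bs{\E}_{\rm sdl}$ follows by applying the same argument to the adjoint channel $\mc{E}\ct$, for which $u(\mc{E}\ct) = u(\mc{E})$, the survival rate is unchanged, and the roles of the state-dependent leakage and nonunital blocks are interchanged. Thus $D = 0$ always collapses to $\beta = \gamma = 0$, the Jordan case never arises, and the proposition holds unconditionally. Pinning down this adjoint-symmetry step carefully (verifying exactly how $u$, the survival rate, and the two off-diagonal blocks transform under $\mc{E} \mapsto \mc{E}\ct$) is the part that needs attention; everything else is routine $2\times2$ linear algebra.
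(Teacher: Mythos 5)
Your proof is correct, and its skeleton — reduce to the $2\times2$ block on $\mathrm{span}\{|B_1),|B_2)\}$, read off $\lambda_\pm$ from the characteristic polynomial, and isolate the degenerate case where the matrix could be a nontrivial Jordan block — is exactly the paper's. Where you genuinely diverge is in how you kill the Jordan block. The paper's argument is dynamical: it prepares the antisymmetric state $\Pi_a$, whose $B_2$-coefficient $\pi_2$ is negative, evolves under $\bs{\M}^m$, and notes that the off-diagonal entry of $\bs{\M}^m$ grows linearly in $m$, so a nonzero off-diagonal entry would eventually drive a probability negative; hence it must vanish. Your argument is structural: the degeneracy condition $S(\mc{E})^2=u(\mc{E})$ together with $\bc{E}_{\rm sdl}=0$ lets you rescale to a CPTP channel of unitarity one, which must be unitary and hence unital, forcing $\bc{E}_{\rm n}=0$; the mirror case goes through the adjoint channel, and your bookkeeping of how $S$, $u$, and the two off-diagonal blocks transform under $\mc{E}\mapsto\mc{E}\ct$ is correct. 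The characterization you invoke is Proposition~\ref{prop:unitarity} ($u=1$ iff unitary), not Proposition~\ref{prop:unitarity_relation} as you cite; it appears later in the paper but is proved independently of this proposition, so there is no circularity — though it does import the heavier determinant argument of that proof, whereas the paper's route needs only positivity of probabilities. Both routes tacitly assume $S(\mc{E})>0$ (the paper discards $\bs{\M}_{11}=0$ as pathological; you need it to divide by $S(\mc{E})$). As a remark, the paper's later bound $\max\{\lVert\bc{E}_{\rm n}\rVert^2,\lVert\bc{E}_{\rm sdl}\rVert^2\}\le\tfrac12(d^2-1)[S(\mc{E})^2-u(\mc{E})]$ (Proposition~\ref{prop:bound_norms}) disposes of the entire degenerate case in one line, so your conclusion that $D=0$ always collapses to a scalar block is consistent with, and slightly stronger than, what the proposition strictly requires.
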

\begin{proof}
Since the averaged operator vanishes almost everywhere, we only need to 
consider the 
$2 \times 2$ submatrix derived above. The nontrivial eigenvalues are
\begin{align}\label{eq:eigenvalues}
\lambda_{\pm} &= \frac{1}{2}\bs{\M}_{11} + \frac{1}{2}\bs{\M}_{22}\notag\\
&\quad \pm\frac{1}{2} \sqrt{[\bs{\M}_{11} - \bs{\M}_{22}]^2 + 
4\bs{\M}_{12} \bs{\M}_{21}}\,.
\end{align}
This spectrum is degenerate precisely when the terms under the square root 
both vanish (since both terms are nonnegative). Whenever the spectrum is nondegenerate, 
there are trivially two distinct eigenvectors, so we only need to deal with the 
degenerate case. 

We will break the analysis for the degenerate spectrum into two nontrivial cases, 
$\bs{\M}_{11}=\bs{\M}_{22}$ and either $\bs{\M}_{12}=0$ or $\bs{\M}_{21}=0$, exclusive. 
There are also two trivial cases: when $\bs{\M}_{12}=\bs{\M}_{21}=0$, the matrix $\bs{\M}$ 
is already diagonal and we are done. We ignore the pathological case when 
$\bs{\M}_{11} = 0$, since this corresponds physically to a state that is never observable. 
In both nontrivial cases, we will make use of the two-qudit state 
$\Pi_{a} = \frac{\unit-S}{d (d-1)}$, the maximally mixed state on the antisymmetric subspace. 
Expanding this state in the $B_i$ basis gives
\[|\Pi_a) = \pi_1 |B_1) + \pi_2 |B_2) = \frac{1}{d} |B_1) - \frac{\sqrt{d^2-1}}{d (d-1)} |B_2) \,. \]
The key feature of this state is that $\pi_2 < 0$. 

Case 1: $\bs{\M}_{12}=0$. In this case,
\begin{align}
\bs{\M} = \left(\begin{array}{cc} \lambda & 0 \\ y & \lambda\end{array}\right),
\end{align}
where $\lambda > 0$ and $y\ge0$. Taking the $m$th power gives
\begin{align}
\bs{\M}^m = \lambda^{m-1}\left(\begin{array}{cc} \lambda & 0 \\ m y & \lambda\end{array}\right)\,.
\end{align}
If we perform the measurement $\{\Pi_a,\unit-\Pi_a\}$ on a system prepared in 
the state 
$\unit_{d^2}/d^2$ which evolves under $\bs{\M}^m$, then the probability of observing the 
outcome $\Pi_a$ is
\begin{align}
(\Pi_a|\bs{\M}^m|\unit_{d^2}/d^2) &= 
\frac{\lambda^{m-1}}{d} (\lambda \pi_1 + m y \pi_2).
\end{align}
Since $\lambda, \pi_1>0$, $y \ge 0$, and $\pi_2 < 0$, in order for this to be a probability 
for all $m$, we require $y=0$ and so $\bs{\M}$ is actually diagonal.

Case 2: $\bs{\M}_{21}=0$. In this case,
\begin{align}
\bs{\M} = \left(\begin{array}{cc} \lambda & y \\ 0 & \lambda\end{array}\right),
\end{align}
where $\lambda > 0$ and $y\ge0$. Taking the $m$th power gives
\begin{align}
\bs{\M}^m = \lambda^{m-1}\left(\begin{array}{cc} \lambda & my \\ 0 & \lambda\end{array}\right).
\end{align}
Therefore the probability of detecting the system (i.e., measuring $\unit_{d^2}$) 
when a system is prepared in the state $\Pi_a$ and evolves under $\bs{\M}^m$ is
\begin{align}
(\unit_{d^2}|\bs{\M}^m|\Pi_a) = \lambda^{m-1}(\lambda \pi_1+m y \pi_2)\,.
\end{align}
Again since $\lambda, \pi_1>0$, $y \ge 0$, and $\pi_2 < 0$, for this to be a valid 
probability for all $m$, we require $y=0$ and so $\bs{\M}$ is actually diagonal.
\end{proof}

We now have all the ingredients to derive the fit models of 
Eqs.~\eqref{eq:TP_decay} and \eqref{eq:TD_decay}. 

\begin{thm}\label{thm:main}
For time- and gate-independent noise, the expected value $\md{E}_{\bf{j}}[Q_{\bf{j}}^2]$
obeys the decay equation
\[\md{E}_{\bf{j}}[Q_{\bf{j}}^2] = A + B u(\mc{E})^{m-1}\]
for trace-preserving noise, and for trace-decreasing noise it obeys
\[\md{E}_{\bf{j}}[Q_{\bf{j}}^2] =  A\lambda_+^{m-1} + B \lambda_-^{m-1}\,,\]
where $\lambda_\pm$ are given by \autoref{eq:eigenvalues}, 
$\lambda_+ + \lambda_- = S(\mc{E})^2 + u(\mc{E})$,
and the constants
$A$ and $B$ depend only on state preparation and measurement errors and the 
unitary that diagonalizes $\bc{M}$.
\end{thm}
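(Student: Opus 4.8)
The plan is to assemble the fit model directly from the three preceding propositions. Recall from \autoref{eq:mean_squares} that $\md{E}_{\bf{j}}[Q_{\bf{j}}^2] = (Q\tn{2}|\bs{\M}^{m-1}|\rho\tn{2})$, so the entire behavior in $m$ is controlled by the powers of $\bs{\M}$. By the second proposition, $\bs{\M}$ is supported on the two-dimensional trivial-irrep subspace spanned by $|B_1)$ and $|B_2)$, so it suffices to analyze the $2\times 2$ submatrix whose entries are given explicitly by \propautorefname~\ref{prop:norms}. First I would write that submatrix, read off its eigenvalues $\lambda_\pm$ from \autoref{eq:eigenvalues}, and note the trace identity $\lambda_+ + \lambda_- = \bs{\M}_{11} + \bs{\M}_{22} = S(\mc{E})^2 + u(\mc{E})$, which is exactly \autoref{eq:eigenvalue_sum}.

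Next I would diagonalize. By \propautorefname~\ref{prop:eigenvec}, $\bs{\M}$ always has two distinct eigenvectors (in the degenerate-spectrum case it is already diagonal with a repeated eigenvalue, which still furnishes two independent eigenvectors), so there is an invertible $V$ with $\bs{\M} = V \, \mathrm{diag}(\lambda_+,\lambda_-)\, V^{-1}$ on the supported subspace. Then $\bs{\M}^{m-1} = V\,\mathrm{diag}(\lambda_+^{m-1},\lambda_-^{m-1})\,V^{-1}$, and substituting into $(Q\tn{2}|\bs{\M}^{m-1}|\rho\tn{2})$ and expanding the matrix product groups all the $m$-independent overlaps — $(Q\tn{2}|$, $|\rho\tn{2})$, and the entries of $V$ and $V^{-1}$ — into two scalar coefficients
\begin{align}
A = (Q\tn{2}|v_+)(w_+|\rho\tn{2}), \qquad B = (Q\tn{2}|v_-)(w_-|\rho\tn{2}),
\end{align}
where $v_\pm$, $w_\pm$ are the right/left eigenvectors. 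This yields $\md{E}_{\bf{j}}[Q_{\bf{j}}^2] = A\lambda_+^{m-1} + B\lambda_-^{m-1}$, i.e.\ \autoref{eq:TD_decay}. Since $Q$, $\rho$ only enter through these overlaps, $A$ and $B$ depend only on SPAM and on the diagonalizing transformation, as claimed.

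For the trace-preserving specialization, I would observe that when $\mc{E}$ is CPTP we have $S(\mc{E}) = 1$ and $\bs{\E}_{\rm sdl} = 0$ (trace preservation forces the top row of \autoref{eq:block_terms} to be $(1,0,\ldots,0)$), so $\bs{\M}_{11} = 1$ and $\bs{\M}_{12} = 0$ by \propautorefname~\ref{prop:norms}. The submatrix is then lower-triangular with diagonal $(1, u(\mc{E}))$, so $\lambda_+ = 1$ and $\lambda_- = u(\mc{E})$, and \autoref{eq:TD_decay} collapses to $\md{E}_{\bf{j}}[Q_{\bf{j}}^2] = A + B\, u(\mc{E})^{m-1}$, which is \autoref{eq:TP_decay}. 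That $u(\mc{E})\in[0,1]$ with equality iff $\mc{E}$ is unitary follows from \propautorefname~\ref{prop:unitarity_relation} together with the contractivity of the unital block, which I would cite rather than reprove.

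The only real subtlety — and the step I would treat most carefully — is the degenerate case $\lambda_+ = \lambda_-$, where a naive eigenbasis might not exist. The point, supplied by \propautorefname~\ref{prop:eigenvec}, is that degeneracy forces $\bs{\M}_{12}\bs{\M}_{21} = 0$ \emph{and} $\bs{\M}_{11} = \bs{\M}_{22}$, and then the argument there (using positivity of the antisymmetric-subspace probabilities for all $m$) rules out a genuine Jordan block, so $\bs{\M}$ is actually a scalar multiple of the identity on the supported subspace and the decomposition $A\lambda^{m-1} + B\lambda^{m-1}$ still holds with a combined coefficient. So there is no residual $m\lambda^{m-1}$ term to worry about, and the fit model is a pure sum of (at most two) exponentials in all cases.
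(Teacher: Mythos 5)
Your proposal is correct and follows essentially the same route as the paper: reduce to the $2\times 2$ submatrix via \autoref{eq:mean_squares}, invoke Proposition~\ref{prop:eigenvec} to diagonalize $\bs{\M}$ and absorb the similarity transform into the SPAM overlaps, and specialize to the trace-preserving case using $S(\mc{E})=1$ and $\bs{\E}_{\rm sdl}=0$ from Proposition~\ref{prop:norms}. Your treatment is somewhat more explicit than the paper's (spelling out the left/right eigenvector coefficients and the degenerate/Jordan-block subtlety), but it is the same argument.
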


\begin{proof}
Proposition \ref{prop:eigenvec} establishes that the matrix $\bs{\M}$ is diagonalizable 
by a similarity transform with eigenvalues given by \autoref{eq:eigenvalues}.
From \autoref{eq:mean_squares}, we can diagonalize $\bs{\M}$ and absorb 
the similarity transform into $|\rho\tn{2})$ and $(Q\tn{2}|$ as SPAM, yielding
\begin{align}
	\md{E}_{\bf{j}}[Q_{\bf{j}}^2] &=(Q\tn{2}|\bs{\M}^{m-1}|\rho\tn{2})\notag \\
	&= A\lambda_+^{m-1} + B\lambda_-^{m-1}\,.
\end{align}
Trace-preserving noise is a special case of this, since 
if $\mc{E}$ is TP, then by Proposition~\ref{prop:norms} we have
$\lambda_+ = S(\mc{E})^2 = 1$ and so $\lambda_- = u(\mc{E})$. 
\end{proof}

We note that the unitary that diagonalizes $\bs{\M}$ will in general depend on the 
noise channel, and hence will depend on $u$. We conflate this dependence with the 
SPAM errors in our fit model, as the diagonalization of $\bs{\M}$ does not depend
on the sequence length $m$. Neglecting the dependence on $u$ thus results in a model
that is correct, but is slightly less sensitive to $u$ than is optimal.

We are now equipped to formalize the observation made in \autoref{sec:protocol}
that the optimal observable for trace-preserving noise is an operator proportional
to $B_2$. This follows from noting that such operators overlap fully with the component of 
$\bc{M}$ that give rise to the exponential term, as given by \autoref{eq:S-EE-S}.

\section{Properties of the Unitarity}\label{sec:properties}

We now prove some properties of the unitarity for CPTP channels that 
make it a practical quantification of the coherence of a channel. We begin by 
proving that the unitarity and the average incoherent survival probability can 
be used to bound the nonunital and state-dependent leakage terms which are 
subtracted off in the definition of unitarity in \autoref{def:unitarity}. 

\begin{prop}\label{prop:bound_norms}
For any channel $\mc{E}$, 
\begin{align}\label{eq:bound_norms}
\max\{\lVert \bc{E}_{\rm n}\rVert^2 , \lVert \bc{E}_{\rm sdl}\rVert^2\} \leq 
\tfrac12(d^2-1)[S(\mc{E})^2-u(\mc{E})].
\end{align}
If $\mc{E}$ is trace-preserving, then $\lVert \bc{E}_{\rm n}\rVert^2 \leq 
(d-1)[1-u(\mc{E})]$.
\end{prop}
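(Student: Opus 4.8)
The plan is to reduce both bounds to a handful of elementary identities for the blocks of $\bc E$, and then to use positivity of $\mc E\tn2$ for the trace-preserving statement and complete positivity of $\mc E$ for the general one. The identities I would first establish, all by expanding $\mc E(\unit_d)$, $\mc E\ct(\unit_d)$ and $\bc E$ in an orthonormal basis $\{A_k\}$ with $A_1=\unit_d/\sqrt d$ (using Proposition~\ref{prop:norms} for the last), are
\[
\norm{\mc E(\unit_d)}_F^2=d\bigl[S(\mc E)^2+\norm{\bc E_{\rm n}}^2\bigr],\quad
\norm{\mc E\ct(\unit_d)}_F^2=d\bigl[S(\mc E)^2+\norm{\bc E_{\rm sdl}}^2\bigr],
\]
\[
\norm{\bc E}_F^2=S(\mc E)^2+\norm{\bc E_{\rm sdl}}^2+\norm{\bc E_{\rm n}}^2+(d^2-1)u(\mc E).
\]
For the trace-preserving case $S(\mc E)=1$ and $\bc E_{\rm sdl}=0$; combining the matrix elements of $\bc E\tn2$ on $\mathrm{span}\{|\unit_{d^2}),|S)\}$ from the proof of Proposition~\ref{prop:norms} with these identities gives $(\unit_{d^2}-S|\bc E\tn2|\unit_{d^2}+S)=(d^2-1)[S(\mc E)^2-u(\mc E)]+(d-1)\norm{\bc E_{\rm sdl}}^2-(d+1)\norm{\bc E_{\rm n}}^2$. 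Since $\unit_{d^2}\pm S$ are positive semidefinite and $\mc E\tn2$ is completely positive, hence maps positive operators to positive operators, the left side is nonnegative; substituting $S(\mc E)=1$ and $\bc E_{\rm sdl}=0$ yields $\norm{\bc E_{\rm n}}^2\le\tfrac{d^2-1}{d+1}(1-u(\mc E))=(d-1)(1-u(\mc E))$.

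For the general bound I would pass to the Choi matrix $J=J(\mc E)=\sum_{ij}|i\rangle\langle j|\ox\mc E(|i\rangle\langle j|)\ge0$. Since $\tr_1 J=\mc E(\unit_d)$, $\tr_2 J$ is the transpose of $\mc E\ct(\unit_d)$, $\tr J=dS(\mc E)$ and $\tr J^2=\norm{\bc E}_F^2$, the identities above become $\tr[(\tr_1J)^2]=d[S(\mc E)^2+\norm{\bc E_{\rm n}}^2]$, $\tr[(\tr_2J)^2]=d[S(\mc E)^2+\norm{\bc E_{\rm sdl}}^2]$, and $(\tr J)^2-\tr J^2=(d^2-1)[S(\mc E)^2-u(\mc E)]-\norm{\bc E_{\rm n}}^2-\norm{\bc E_{\rm sdl}}^2$. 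A direct substitution then shows that $\norm{\bc E_{\rm n}}^2\le\tfrac12(d^2-1)[S(\mc E)^2-u(\mc E)]$ is \emph{equivalent} to the operator-trace inequality $\tr[(\tr_1J)^2]-\tr[(\tr_2J)^2]\le d\bigl[(\tr J)^2-\tr J^2\bigr]$.

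To prove this inequality I would linearise it on $J\ox J\ge0$: by $\tr[A^2]=\tr[(A\ox A)S]$ and manipulation of partial traces, each of $(\tr J)^2$, $\tr J^2$, $\tr[(\tr_1J)^2]$ and $\tr[(\tr_2J)^2]$ equals $\tr[X(J\ox J)]$ for an $X$ built from identities and swaps on the two copies of the input system and on the two copies of the output system, so the inequality reads $\tr[M(J\ox J)]\ge0$ with $M=d(\unit\ox\unit-S\ox S)-\unit\ox S+S\ox\unit$ (first tensor factor on the input copies, second on the output copies). Evaluating $M$ on the joint $\pm1$-eigenspaces of the two swaps—equivalently, writing $M=2(d+1)\,\Pi_{\rm sym}\ox\Pi_{\rm anti}+2(d-1)\,\Pi_{\rm anti}\ox\Pi_{\rm sym}$ with $\Pi_{\rm sym},\Pi_{\rm anti}$ the symmetric/antisymmetric projectors—shows $M\ge0$, so $\tr[M(J\ox J)]\ge0$. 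Running the same argument with $\mc E$ replaced by $\mc E\ct$ bounds $\norm{\bc E_{\rm sdl}}^2$ by the same quantity, which gives the $\max$ in \autoref{eq:bound_norms}. The step I expect to be the crux is the factor $\tfrac12$: the positivity argument used for the trace-preserving bound only delivers $\norm{\bc E_{\rm n}}^2+\norm{\bc E_{\rm sdl}}^2\le(d^2-1)[S(\mc E)^2-u(\mc E)]$, and one genuinely needs complete positivity—concretely, $J\ox J\ge0$ tested against the positive operator $M$—to split this sum evenly between the two blocks.
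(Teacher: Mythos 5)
Your proof is correct, and at its core it is the same argument as the paper's, though you reach it by a longer route. The paper simply observes that $p_{as}=(E_a|\bc{M}|\Pi_s)$ and $p_{sa}=(E_s|\bc{M}|\Pi_a)$ are probabilities, writes both in terms of $S(\mc{E})^2$, $u(\mc{E})$, $\lVert\bc{E}_{\rm n}\rVert^2$, $\lVert\bc{E}_{\rm sdl}\rVert^2$ via Proposition~\ref{prop:norms}, and takes the linear combination $(d+1)p_{as}+(d-1)p_{sa}\ge 0$ (resp.\ $(d-1)p_{as}+(d+1)p_{sa}\ge 0$) to cancel $\lVert\bc{E}_{\rm sdl}\rVert^2$ (resp.\ $\lVert\bc{E}_{\rm n}\rVert^2$); the factor $\tfrac12$ falls out of this combination directly. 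Your operator $M=2(d+1)\,\Pi_{\rm sym}\ox\Pi_{\rm anti}+2(d-1)\,\Pi_{\rm anti}\ox\Pi_{\rm sym}$ is, after unwinding the Choi--Jamio{\l}kowski correspondence $\tr[(X\ox Y)(J\ox J)]=\tr[Y\,\mc{E}\tn{2}(X)]$, exactly this weighted sum: $\tr[M(J\ox J)]$ is a positive combination of $\tr[\Pi_{\rm anti}\,\mc{E}\tn{2}(\Pi_{\rm sym})]$ and $\tr[\Pi_{\rm sym}\,\mc{E}\tn{2}(\Pi_{\rm anti})]$ with the same weights $(d+1)$ and $(d-1)$. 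So the Choi-matrix detour is valid but not buying you anything beyond the probability argument. This also means your closing remark is mistaken: it is not true that the positivity argument ``only delivers'' the sum bound $\lVert\bc{E}_{\rm n}\rVert^2+\lVert\bc{E}_{\rm sdl}\rVert^2\le(d^2-1)[S(\mc{E})^2-u(\mc{E})]$ and that one genuinely needs $J\ox J\ge 0$ to get the factor $\tfrac12$. You used only one of the two positivity statements, $(\unit_{d^2}-S|\bc{E}\tn{2}|\unit_{d^2}+S)\ge 0$, in your trace-preserving step; adjoining the other, $(\unit_{d^2}+S|\bc{E}\tn{2}|\unit_{d^2}-S)\ge 0$, and eliminating one unknown between the two already yields $\lVert\bc{E}_{\rm n}\rVert^2\le\tfrac12(d^2-1)[S(\mc{E})^2-u(\mc{E})]$ and likewise for $\lVert\bc{E}_{\rm sdl}\rVert^2$, with no reference to the Choi matrix. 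All your intermediate identities (the block expressions for $\lVert\mc{E}(\unit_d)\rVert_F^2$, $\lVert\mc{E}\ct(\unit_d)\rVert_F^2$, $\lVert\bc{E}\rVert_F^2$, and the eigenvalue computation showing $M\ge 0$) check out.
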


\begin{proof}
Consider the maximally mixed states on the symmetric and antisymmetric 
subspaces, $\Pi_s = \tfrac{\unit+S}{d(d+1)}$ and $\Pi_a = 
\tfrac{\unit-S}{d(d-1)}$ respectively, and let $E_s$ and $E_a$ be the respective 
projectors onto these spaces. Expanding these states in the $B_i$ 
basis gives
\begin{equation}
\begin{aligned}
|\Pi_s) &= \frac{1}{d}|B_1) + \frac{\sqrt{d^2-1}}{d(d+1)}|B_2) \\
|\Pi_a) &= \frac{1}{d}|B_1) - \frac{\sqrt{d^2-1}}{d(d-1)}|B_2)\,.
\end{aligned}
\end{equation}
Preparing the state $\Pi_s$ ($\Pi_a$), evolving under $\bc{M}$ and then 
measuring the POVM $\{E_a,\unit-E_a\}$ ($\{E_s,\unit-E_s\}$) produces 
the outcomes $E_a$ ($E_s$) with probabilities
\begin{align}
p_{as} &= (E_a|\bc{M}|\Pi_s) \notag\\
&=\frac{d-1}{2 d}\left(S(\mc{E})^2
-u(\mc{E})- \frac{\lVert \bc{E}_{\rm n} \rVert^2}{d-1} + \frac{\lVert 
\bc{E}_{\rm sdl} \rVert^2}{d+1}\right)\label{eq:anti_sym}\\
p_{sa}&=(E_s|\bc{M}|\Pi_a) \notag\\
&=\frac{d+1}{2d}\left(S(\mc{E})^2
-u(\mc{E}) + \frac{\lVert \bc{E}_{\rm n} \rVert^2}{d+1} - \frac{\lVert 
	\bc{E}_{\rm sdl} \rVert^2}{d-1}\right)\label{eq:sym_anti}
\end{align}
respectively, where we have used Proposition~\ref{prop:norms}. Since both these 
expressions are probabilities we have $0\le p_{as} \le 1$ and $0\le p_{sa} \le 1$.
Taking appropriate linear combinations of these two inequalities will cancel the 
dependence on either $\lVert \bc{E}_{\rm n}\rVert^2$ or 
$\lVert \bc{E}_{\rm sdl}\rVert^2$, isolating the other variable. Simplifying the resulting 
expressions gives the bound \autoref{eq:bound_norms} for both quantities 
individually, hence the maximum holds as well.

Furthermore, if the noise is trace-preserving, then $\lVert \bc{E}_{\rm sdl} 
\rVert^2=0$ and $S(\mc{E})=1$, so $p_{as} \ge 0$ gives 
$\lVert \bc{E}_{\rm n}\rVert^2 \leq (d-1)[1-u(\mc{E})]$ 
for trace-preserving noise. 
\end{proof}

We now prove that $u(\mc{E})=1$ if and only if $\mc{E}$ is unitary and that 
$u(\mc{E})$ is invariant under composition with unitaries.

\begin{prop}\label{prop:unitarity}
	For any channel $\mc{E}$, $u(\mc{E})\leq 1$ with equality if and only 
	if $\mc{E}$ is unitary. Furthermore, the unitarity satisfies 
	$u(\mc{V}\circ\mc{E}\circ\mc{U}) = u(\mc{E})$ for any unitaries $U,V\in 
	U(d)$.
\end{prop}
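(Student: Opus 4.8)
The plan is to reduce both claims to facts about the unital block $\bc{E}_{\rm u}$ via Proposition~\ref{prop:unitarity_relation}, which gives $u(\mc{E}) = \tfrac{1}{d^2-1}\tr\bc{E}_{\rm u}\ct\bc{E}^{\vphantom{\ct}}_{\rm u} = \tfrac{1}{d^2-1}\|\bc{E}_{\rm u}\|_F^2$. The composition claim is the easy half: by \autoref{eq:single_rep}, for any unitary $U$ the Liouville matrix is $\phi_L(U) = 1 \oplus \phi_{\rm u}(U)$, so conjugating and composing with unitaries leaves the block structure of \autoref{eq:block_terms} intact and replaces $\bc{E}_{\rm u}$ by $\phi_{\rm u}(V)\,\bc{E}_{\rm u}\,\phi_{\rm u}(U)$. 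Since $\phi_{\rm u}(U)$ and $\phi_{\rm u}(V)$ are unitary matrices (because $\phi_L$ is a unitary representation with $\phi_L(U\ct) = \phi_L(U)\ct$), the Frobenius norm is unchanged, hence $u(\mc{V}\circ\mc{E}\circ\mc{U}) = u(\mc{E})$.

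For the inequality $u(\mc{E}) \le 1$, I would use the fact that $\mc{E}$ is CPTP to control the columns of $\bc{E}_{\rm u}$. Trace preservation forces the top row of \autoref{eq:block_terms} to be $(1, 0, \ldots, 0)$, i.e.\ $S(\mc{E}) = 1$ and $\bc{E}_{\rm sdl} = 0$. Now for each traceless basis element $A_k$ ($k > 1$), the image $\mc{E}(A_k)$ is traceless (by trace preservation), so its Liouville vector lies entirely in the unital block; the $k$th column of $\bc{E}_{\rm u}$ is exactly $|\mc{E}(A_k))$ restricted to that block, and its squared norm is $\tr\mc{E}(A_k)\ct\mc{E}(A_k)$. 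Because $A_k$ is Hermitian (taking the Pauli-type basis) with $\tr A_k\ct A_k = 1$, I can write $A_k = \rho_+ - \rho_-$ for subnormalized states (or use $\unit/d \pm \lambda A_k \succeq 0$ for small $\lambda$), and use that $\mc{E}$ is contractive on the trace norm and that the Hilbert–Schmidt norm of a quantum state is at most $1$, to conclude $\|\mc{E}_{\rm u}\|_F^2 \le d^2 - 1$, hence $u(\mc{E}) \le 1$. Alternatively, and more cleanly, I would observe that $\bc{M}_{22} = u(\mc{E})$ appears as the probability-related matrix element in Proposition~\ref{prop:norms}, and extract $u(\mc{E}) \le 1$ directly from the positivity constraints already derived in Proposition~\ref{prop:bound_norms} (setting $\bc{E}_{\rm n} = \bc{E}_{\rm sdl} = 0$, $S(\mc{E}) = 1$ makes $p_{as} \ge 0$ read $1 - u(\mc{E}) \ge 0$).

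For the equality case, suppose $u(\mc{E}) = 1$, so $\|\bc{E}_{\rm u}\|_F^2 = d^2 - 1$, meaning every column $|\mc{E}(A_k))$ ($k>1$) has unit Hilbert–Schmidt norm: $\tr\mc{E}(A_k)^2 = 1$. Combined with trace preservation, each $\mc{E}(\rho)$ for $\rho$ a pure state must satisfy $\tr\mc{E}(\rho)^2 = 1$ (after re-expanding the Bloch vector and using that its length is preserved), so $\mc{E}$ maps pure states to pure states. A CP map that preserves purity of all pure states must have a single Kraus operator (this is a standard fact — if $\mc{E}(\rho) = \sum_i K_i \rho K_i\ct$ and $\rho = \density{\psi}$ is pure with $\mc{E}(\rho)$ pure, then all $K_i\ket{\psi}$ are proportional; ranging over all $\ket{\psi}$ forces all $K_i$ proportional to a single $K$), and trace preservation then makes $K$ unitary. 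The converse — that a unitary channel has $u = 1$ — is immediate from \autoref{eq:single_rep}, since then $\bc{E}_{\rm u} = \phi_{\rm u}(U)$ is unitary of dimension $d^2-1$. The main obstacle is making the purity-preservation-implies-single-Kraus-operator step rigorous and self-contained; I expect this is cleanest by arguing that the Choi state of $\mc{E}$ restricted to a maximally entangled input has rank one, or by the explicit Kraus-operator argument sketched above, and I would cite or include a short lemma to that effect.
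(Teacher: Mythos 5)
Your treatment of the unitary invariance and of the inequality $u(\mc{E})\le 1$ is sound and close in spirit to the paper's: the invariance is exactly the cyclicity-of-trace argument applied to $\tfrac{1}{d^2-1}\tr\bigl[(\phi_{\rm u}(V)\bc{E}_{\rm u}\phi_{\rm u}(U))\ct\phi_{\rm u}(V)\bc{E}_{\rm u}\phi_{\rm u}(U)\bigr]$, and your ``cleaner'' route to $u\le 1$ via the positivity constraint $p_{as}\ge 0$ of Proposition~\ref{prop:bound_norms} is precisely what the paper relies on (note you need not set $\bc{E}_{\rm n}=0$; the bound $\lVert\bc{E}_{\rm n}\rVert^2\le (d-1)[1-u(\mc{E})]$ already gives $u\le 1$ since its left side is nonnegative). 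Your first route to the inequality, via trace-norm contractivity applied column by column, does not work as sketched: writing $A_k=\rho_+-\rho_-$ only yields $\lVert\mc{E}(A_k)\rVert_F\le\lVert A_k\rVert_1=\sqrt{d}$ for a normalized Pauli, giving $u\le d$ rather than $u\le 1$.

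The genuine gap is in the ``only if'' direction of the equality case. First, the step ``$\lVert\bc{E}_{\rm u}\rVert_F^2=d^2-1$ means every column has unit norm'' requires a per-column upper bound of $1$, which you have not established (and, as above, your contractivity argument gives $\sqrt{d}$). Second, and more seriously, the ``standard fact'' you invoke --- that a CPTP map sending every pure state to a pure state has a single Kraus operator --- is false: the channel $\mc{E}_0(\rho)=\tr(\rho)\ket{0}\bra{0}$, with Kraus operators $K_i=\ket{0}\bra{i}$, maps every pure state to the pure state $\ket{0}\bra{0}$ yet is not unitary; correspondingly, ``$K_i\ket{\psi}\parallel K_j\ket{\psi}$ for all $\psi$'' does not force $K_i\propto K_j$ (take $K_1=\ket{0}\bra{0}$, $K_2=\ket{0}\bra{1}$). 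This is exactly the pathological channel the paper's definition of unitarity is built to exclude, so the missing ingredient is unitality. The repair is available from results you already cite: $u(\mc{E})=1$ together with trace preservation forces $\bc{E}_{\rm n}=\bc{E}_{\rm sdl}=0$ and $S(\mc{E})=1$ by Proposition~\ref{prop:bound_norms}, whereupon either (i) Proposition~\ref{prop:Jpure} gives $\tr[J(\mc{E})\ct J(\mc{E})]=1$, so the Jamio{\l}kowski state is pure, $\mc{E}$ has a single Kraus operator $K$, and $K\ct K=\unit$ makes it unitary; or (ii) one follows the paper: the adjoint of a unital TP channel is a channel, so all singular values of $\bc{E}$ are at most one, hence $\lVert\bc{E}_{\rm u}\rVert_F^2=d^2-1$ forces them all to equal one, $\lvert\det\bc{E}\rvert=1$, and only unitary channels achieve this~\cite{Wolf2008}. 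Either patch replaces your purity-preservation lemma; without one of them the argument does not close.
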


\begin{proof}
The unitary invariance $u(\mc{V}\circ\mc{E}\circ\mc{U}) = u(\mc{E})$ follows 
immediately from the invariance of the trace under cyclic permutations.

Since the norms of vectors are always nonnegative, $u(\mc{E}) = 1$ only if 
$\mc{E}$ is trace-preserving and unital by \autoref{eq:bound_norms}, in which 
case the adjoint channel $\mc{E}\ct$ is also a channel~\cite{Watrous2011} and 
so the eigenvalues of $\bc{E}\ct\bc{E}$ (i.e., the singular values of $\bc{E}$) 
are all bounded by one~\cite{Evans1978}. Therefore $u(\mc{E})=1$ only if 
$\mc{E}$ is unital and all the eigenvalues of $\bc{E}$ have unit modulus and 
consequently if $\av{\det\bc{E}}=1$. However, the only channels with 
$\lvert\det\bc{E}\rvert = 1$ are unitary channels~\cite{Wolf2008}. Since 
$u(\mc{E})$ is unitarily invariant and $u(\mc{I})=1$, $u(\mc{E})=1$ if and only 
if $\mc{E}$ is unitary, as claimed.
\end{proof}

We now show that the unitarity can be used with the average gate infidelity to 
quantify the intermediate regime between incoherent and unitary errors. It is 
useful to define a notion of average gate infidelity that has been optimized to 
remove unitary noise. First recall the definition of average gate infidelity, 
\begin{align}
	r(\mc{E}) = 1- \int \mathrm{d}\psi \tr[\psi \mathcal{E}(\psi)]\,.
\end{align}
Then for any CPTP channel $\mc{E}$, define 
\begin{align}
	R(\mc{E}) = \min_{U,V\in U(d)} r(\mc{V}\circ\mc{E}\circ\mc{U})\,.
\end{align}
This quantity can be thought of as the best average gate infidelity 
that is achievable with perfect unitary control. For example, 
if $\mc{E}$ is a unitary channel, then $R(\mc{E})=0$. 

\begin{prop}\label{prop:bound}
	For any CPTP channel $\mc{E}$ with average gate infidelity $r=r(\mc{E})$ to the 
	identity and $R=R(\mc{E})$ as above, then the following inequalities hold
	\begin{align}\label{eq:unitarityineq}
		u(\mc{E})\geq [1 - d R/(d-1)]^2 \geq [1 - dr/(d-1)]^2 \,.
	\end{align}
	The chain of inequalities is saturated if and only if $\mc{E}$ has a 
	unital block $\bc{E}_{\rm u}$ that is a diagonal scalar matrix.
\end{prop}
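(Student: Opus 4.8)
The plan is to translate both sides into statements about the unital block $\bc{E}_{\rm u}$ of the Liouville representation, at which point the first inequality becomes the Cauchy--Schwarz inequality. First I would record the dictionary. From \autoref{prop:unitarity_relation} we already have $u(\mc{E}) = (d^2-1)^{-1}\|\bc{E}_{\rm u}\|_F^2$. For the infidelity, reuse the operator $\mc{O} = \tfrac1d P_T + \tfrac{1}{d(d+1)}P_{\rm u}$ computed in the proof of \autoref{prop:unitarity_relation}: the average gate fidelity of a trace-preserving $\mc{E}$ is $1-r(\mc{E}) = \int{\rm d}\psi\,(\psi|\bc{E}|\psi) = \tr[\bc{E}\,\mc{O}] = \tfrac1d + \tfrac{1}{d(d+1)}\tr\bc{E}_{\rm u}$, using $(A_1|\bc{E}|A_1)=1$ for trace-preserving channels. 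Rearranging,
\begin{align}
1 - \frac{d\,r(\mc{E})}{d-1} = \frac{\tr\bc{E}_{\rm u}}{d^2-1}\,.
\end{align}

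With this dictionary, the inequality $u(\mc{E}) \geq [1-dr/(d-1)]^2$ is exactly $(d^2-1)\|\bc{E}_{\rm u}\|_F^2 \geq (\tr\bc{E}_{\rm u})^2$, i.e.\ Cauchy--Schwarz for the Hilbert--Schmidt inner product between $\bc{E}_{\rm u}$ and the $(d^2-1)$-dimensional identity (whose squared Frobenius norm is $d^2-1$); equality holds iff $\bc{E}_{\rm u}$ is a scalar matrix $\lambda\unit_{d^2-1}$. To get the $R$-inequality, let $U_0,V_0$ attain the minimum defining $R(\mc{E})$ and apply the inequality just proved to $\mc{E}_0 := \mc{V}_0\circ\mc{E}\circ\mc{U}_0$, which is CPTP with $r(\mc{E}_0) = R(\mc{E})$; this gives $u(\mc{E}_0) \geq [1-dR/(d-1)]^2$, and $u(\mc{E}_0) = u(\mc{E})$ by the unitary invariance of the unitarity (\autoref{prop:unitarity}).

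For the middle inequality, the trivial choice $U=V=\unit$ shows $R(\mc{E}) \leq r(\mc{E})$, and averaging the rotated fidelity over $V$ against the Haar measure annihilates the nontrivial irrep $\phi_{\rm u}$, which forces $R(\mc{E}) \leq (d-1)/d$ and hence $1-dR/(d-1) \geq 0$; so in the regime where $\mc{E}$ is no worse in fidelity than the completely depolarizing channel (i.e.\ $1-dr/(d-1)\geq0$ as well) one may square $1-dR/(d-1)\geq 1-dr/(d-1)$ to conclude. The chain saturates exactly when both applications of Cauchy--Schwarz and the step $R=r$ are tight: tightness for $\mc{E}_0$ forces $\bc{E}_{\rm u}$ proportional to an element of the image of $\phi_{\rm u}$ (an orthogonal matrix), and $R=r$ then pins that matrix to the identity, so the chain is saturated precisely when $\bc{E}_{\rm u}$ is a (nonnegative) scalar matrix; conversely any such channel saturates it.

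The only real work is the dictionary step --- establishing the identity relating $r(\mc{E})$ to $\tr\bc{E}_{\rm u}$ --- and once that is in place everything follows immediately from Cauchy--Schwarz together with the unitary invariance in \autoref{prop:unitarity}. The one point that needs care is the sign bookkeeping in the middle inequality: squaring a chain of inequalities preserves it only when the un-squared quantities are nonnegative, so strictly speaking the unconditional statement is $\sqrt{u(\mc{E})} \geq 1-dR/(d-1) \geq 1-dr/(d-1)$, which reduces to the displayed chain whenever $r(\mc{E})\leq (d-1)/d$.
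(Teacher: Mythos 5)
Your proof is correct and is essentially the paper's own argument in a cleaner packaging: the paper writes $\bc{E}=\bc{I}-r\Delta$ and minimizes $\sum_k\Delta_{kk}^2$ under the constraint $\tr\Delta=d(d+1)$, which, after your dictionary $\tr\bc{E}_{\rm u}=(d^2-1)[1-dr/(d-1)]$, is exactly your single Cauchy--Schwarz step $(\tr\bc{E}_{\rm u})^2\le(d^2-1)\lVert\bc{E}_{\rm u}\rVert_F^2$, and both proofs obtain the $R$ version by the unitary invariance of \autoref{prop:unitarity}. Your two refinements --- that squaring the chain is only legitimate when $1-dr/(d-1)\ge 0$, so that the unconditional statement is the unsquared chain $\sqrt{u(\mc{E})}\ge 1-dR/(d-1)\ge 1-dr/(d-1)$, and that saturation requires the scalar multiple of the identity to be nonnegative --- are correct observations about points the paper's proof glosses over.
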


\begin{proof}
Any channel with infidelity $r$ to the identity can be written as $\bc{E} = 
\bc{I} - r\Delta$ where the diagonal entries of $\Delta$ are nonnegative and 
$\tr\Delta = d(d+1)$. We then have
\begin{align*}
\lVert \bc{E}_{\rm u} \rVert_2^2 \geq \sum_{k=2}^{d^2} (1-\Delta_{kk}r)^2 = 
d^2-1 -2r\tr \Delta + \sum_{k=2}^{d^2}\Delta_{kk}^2 r^2,
\end{align*}
with equality if and only if $\bc{E}_{\rm u}$ is diagonal. The term $\sum_k 
\Delta_{kk}^2$ is uniquely minimized for nonnegative $\Delta_{kk}$ subject to 
the constraint $\tr\Delta = d(d+1)$ by setting $\Delta_{kk} = d/(d-1)$ (that 
is, by setting all the diagonal entries to be equal). This proves the weaker 
inequality bounding $u(\mc{E})$ in terms of $r$. To get the stronger inequality 
in terms of $R(\mc{E})$, we use the unitary invariance proven in 
Proposition~\ref{prop:unitarity} and optimize the inequality over all unitary channels. 
\end{proof}

We note that the first inequality in \autoref{eq:unitarityineq} is saturated 
at 1 when the noise channel is unitary, and the chain of inequalities is 
saturated for depolarizing noise, or depolarizing noise composed with amplitude damping.

An immediate corollary of Proposition~\ref{prop:bound} is that the unitarity can 
be used to put a lower bound on the best possible average infidelity in 
the presence of perfect unitary control. Rearranging 
\autoref{eq:unitarityineq}, we find a lower bound
\begin{align}
	\frac{d-1}{d} \bigl(1-\sqrt{u(\mc{E})}\bigr) \le R(\mc{E}) \le r(\mc{E}) \,.
\end{align}

The unitarity is also closely related to the purity of the Jami\l{l}kowski state 
associated to the noise channel.

\begin{prop}\label{prop:Jpure}
The unitarity is related to the purity of the Jamio{\l}kowski state by
\begin{align}\label{eq:Jamiolkowski}
	d^2\tr\left[J(\mc{E})\ct J(\mc{E})\right] &= S(\mc{E})^2 + \|\bs{\E}_{\rm 
	sdl}\|^2 + \|\bs{\E}_{\rm n}\|^2 \notag\\
&\quad + (d^2-1) u(\E)
\end{align}
where $J(\mc{E}) = (\mc{E}\otimes\mc{I})[\Phi]$ and $\Phi = 
\tfrac{1}{d}\sum_{j,k}\ket{jj}\bra{kk}$.
\end{prop}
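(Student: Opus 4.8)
The plan is to compute $d^2\tr[J(\mc{E})\ct J(\mc{E})]$ directly by expressing the Jamio{\l}kowski state's purity in terms of the Liouville representation $\bc{E}$, and then read off the block contributions using the decomposition in \autoref{eq:block_terms}. First I would recall that $J(\mc{E}) = (\mc{E}\otimes\mc{I})[\Phi]$ with $\Phi = \tfrac1d\sum_{jk}\ket{jj}\bra{kk}$, so that, writing everything in an orthonormal operator basis $\{A_k\}$ with $A_1 = \unit/\sqrt d$, the matrix elements of $J(\mc{E})$ are essentially the entries $\bc{E}_{kl}$ up to normalization. The key identity is that the purity of the Choi/Jamio{\l}kowski state equals (up to a $d$-dependent factor) the Frobenius norm squared of the Liouville matrix: concretely $d^2 \tr[J(\mc{E})\ct J(\mc{E})] = \|\bc{E}\|_F^2 = \sum_{k,l} |\bc{E}_{kl}|^2$. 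This is the standard fact that the Choi matrix and the Liouville (transfer) matrix are related by a basis reshuffling that preserves the Hilbert-Schmidt norm up to the fixed factor coming from the normalization of $\Phi$.

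Next I would split the sum $\|\bc{E}\|_F^2$ according to the block structure of \autoref{eq:block_terms}, namely
\begin{align}
\|\bc{E}\|_F^2 = |S(\mc{E})|^2 + \|\bc{E}_{\rm sdl}\|^2 + \|\bc{E}_{\rm n}\|^2 + \|\bc{E}_{\rm u}\|_F^2 \,.
\end{align}
The $(1,1)$ entry is $S(\mc{E})$, the first row (minus the corner) is the state-dependent leakage block $\bc{E}_{\rm sdl}$, the first column (minus the corner) is the nonunital block $\bc{E}_{\rm n}$, and the lower-right $(d^2-1)\times(d^2-1)$ block is the unital block $\bc{E}_{\rm u}$. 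Then I would invoke \propautorefname~\ref{prop:unitarity_relation}, which gives $\|\bc{E}_{\rm u}\|_F^2 = \tr \bc{E}_{\rm u}\ct\bc{E}_{\rm u} = (d^2-1)u(\mc{E})$, and substituting this in yields exactly \autoref{eq:Jamiolkowski}. An alternative, perhaps cleaner, route is to observe that $\|\bc{E}\|_F^2 = \tr[\bc{E}\ct\bc{E}] = (S|\bc{E}\tn2|S)$ by \autoref{eq:S-EE-S}, and then to relate $(S|\bc{E}\tn2|S)$ to the purity of $J(\mc{E})$ directly through $\tr[J(\mc{E})^2] = \tfrac{1}{d^2}\tr[S\,\mc{E}\tn2(S)]$, which follows from the swap-trick identity $\tr[X^2] = \tr[S(X\otimes X)]$ applied to $X = J(\mc{E})$ together with $(\mc{E}\otimes\mc I)\tn2$ acting on $\Phi\tn2$.

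The main obstacle I anticipate is bookkeeping the normalization factors correctly: there are several competing conventions (the $1/d$ in $\Phi$, the $1/\sqrt d$ in $A_1$, the $1/d$ versus $1/\sqrt d$ normalization of the maximally entangled state), and getting the overall $d^2$ prefactor exactly right requires care. I would pin this down by checking the identity channel: for $\mc{E} = \mc{I}$ we have $J(\mc{I}) = \Phi$ with $\tr[\Phi\ct\Phi] = 1$, so the left side is $d^2$, while the right side is $1 + 0 + 0 + (d^2-1)\cdot 1 = d^2$, confirming the normalization. A second, smaller subtlety is verifying that the reshuffling between Choi and Liouville representations genuinely preserves the Frobenius norm (it does, since it is just a permutation of matrix entries), and that the first-row/first-column/corner identifications with $\bc{E}_{\rm sdl}$, $\bc{E}_{\rm n}$, $S(\mc{E})$ match the block convention fixed in \autoref{eq:block_terms} — this is immediate once $A_1 = \unit/\sqrt d$ is fixed.
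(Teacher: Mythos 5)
Your proposal is correct and follows essentially the same route as the paper: both reduce the claim to the identity $d^2\tr[J(\mc{E})\ct J(\mc{E})]=\tr\,\bc{E}\ct\bc{E}$, then split $\|\bc{E}\|_F^2$ according to the block structure of \autoref{eq:block_terms} and invoke Proposition~\ref{prop:unitarity_relation} for the unital block. The only difference is presentational: the paper proves the key identity by expanding $\Phi=d^{-1}\sum_k A_k\otimes A_k$ and cycling the adjoint channel, whereas you cite the standard Choi--Liouville reshuffling (norm-preservation) fact and confirm the normalization on the identity channel, which is an acceptable substitute.
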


\begin{proof}
We begin with an alternate representation of the maximally entangled state $\Phi = 
d^{-1}\sum_k A_k \otimes A_k$. By cycling the adjoint channel in the trace, 
the purity of $J(\mc{E})$ becomes
\begin{align}
	\tr \bigl[J(\mc{E})\ct J(\mc{E})\bigr] 
	&=\tr\bigl[\Phi\, (\mc{E}\ct\mc{E}\otimes\mc{I})[\Phi]\bigr]\notag\\
	&=(\Phi|\bc{E}\ct\bc{E}\otimes\bc{I}|\Phi) \notag\\
	&= \frac{1}{d^2}\sum_{j,k} (A_j\otimes A_j|\bc{E}\ct\bc{E}\otimes \bc{I}|A_k\otimes A_k) 
	\notag\\
	&= \frac{1}{d^2}\sum_{j,k} (A_j|\bc{E}\ct\bc{E}|A_k) (A_j|A_k) \,.
\end{align}
Since the $A_k$ are a trace orthonormal basis, the last line simplifies to
\begin{align}
\tr \bigl[J(\mc{E})\ct J(\mc{E})\bigr] = \tfrac{1}{d^2}\tr \bc{E}\ct\bc{E}.
\end{align}
Comparing this expression to the decomposition in \autoref{eq:block_terms} and using 
\autoref{prop:unitarity_relation} completes the proof. 
\end{proof}

Finally, we give a simple example that shows that the unitarity is \emph{not} 
a monotone, in the sense that it can oscillate under composition of channels.
Consider the two (nearly) dual qubit channels,
\[\mc{E}_0(\rho) = \tr(\rho)\ket{0}\bra{0}
 \quad \text{and}\quad 
 \frac{1}{2} \mc{E}_0^\dagger(\rho) = \bra{0}\rho\ket{0}\frac{\unit}{2}\,.
 \]
Then the unitarity of both $\mc{E}_0$ and $\frac12\mc{E}_0^\dagger$ is zero, 
while the unitarity of the composed channel 
$\frac12\bc{E}^{\vphantom{\dagger}}_0 \bc{E}_0^\dagger$ is 1/12.

We note that for some restricted classes of channels the unitarity is indeed a 
monotone. For example, a trivial application of von Neumann's trace inequality 
shows that if the singular values of the unital block are all 
less than or equal to 1 (which holds for all qubit channels and all unital 
channels), then it is a monotone for trace-preserving channels. 

\section{Conclusion}

In this paper, we have shown that the coherence of a noisy process can be 
quantified by the unitarity, which corresponds to the change in the purity 
(with the identity components subtracted off) averaged over pure states. We 
have presented a protocol for efficiently estimating the unitarity of the 
average noise in the implementation of a unitary 2-design.

We have also proven that the unitarity is 1 if and only if the noise source is 
unitary and provided a tight lower bound for the unitarity in terms of the 
infidelity (which can be estimated using randomized 
benchmarking~\cite{Magesan2011}). This allows the intermediate regime between 
fully incoherent and unitary errors to be quantified, potentially allowing for 
improved bounds on the worst-case error. We have also shown that the unitarity 
provides a lower bound on the best achievable gate infidelity assuming perfect 
unitary control.

Our present results also have direct implications for the loss protocol when 
applied to a unitary 2-design, since the variance over 
random sequences of fixed length for the protocol in Ref.~\cite{Wallman2014a} is
\begin{align}
\mathds{V}_{\bf{j}}(Q_{\bf{j}}) = \mathds{E}_{\bf{j}}(Q_{\bf{j}}^2) - 
[\mathds{E}_{\bf{j}}(Q_{\bf{j}})]^2,
\end{align}
which decays faster with $m$ for fixed $S(\mc{E})$ if the unitarity is 
smaller (and hence the two decay rates in the fit curve for determining the unitarity, 
$\lambda_{\pm}$, are smaller). A lower variance over 
sequences allows a more precise estimation of the average incoherent survival 
probability for a fixed number of experiments. Similar implications may also 
hold for standard randomized benchmarking since $u(\mc{E})$ can easily be seen 
to be one of the eigenvalues of the averaged operator in 
Ref.~\cite{Wallman2014} that determines the variance and is precisely the 
eigenvalue that determines the asymptotic variance. However, in order to 
establish a concrete bound, it would have to be shown that $u(\mc{E})$ is in 
fact the largest eigenvalue.

There are four important open problems raised by this work. First, while the 
unitarity is a monotone for unital noise, it is not a monotone for 
trace-decreasing noise. We leave open the problem of finding necessary and 
sufficient conditions for when $u(\mc{E})$ is a monotone, or finding other 
quantities that are monotonic in general.

Second, our protocol characterizes the unitarity of the average noise, but does 
not characterize the unitarity of the errors in the individual gate. While
a variant of interleaved randomized benchmarking~\cite{Magesan2012} should hold
for the current protocol, obtaining reasonable bounds on the unitarity of the 
individual error is an open problem.

Third, the signal for our protocol is substantially improved by 
the purity measurement, but the method of performing the purity measurement via 
measuring Pauli operators is not scalable beyond a handful of qubits because of the 
exponential size of the Pauli group on $n$ qubits. Moreover, measuring any 
single Pauli operator will in general give a small signal as the number of qubits grows, 
since we do not perform an inversion step. Directly using the SWAP operation on two 
copies of the system running in parallel is a mathematical solution, but the extra 
resources required to implement this might be prohibitive and an analysis of the role 
of crosstalk and correlations would be required to justify this idea. 
Thus, identifying efficient measurements that give a good signal on multi-qubit 
systems remains an open problem.

Finally, a pressing open problem identified in this paper is to obtain an 
improved bound on the worst-case error in terms of both the infidelity and the 
unitarity. Such a bound would substantially reduce the effort 
required to certify that an experimental implementation is near (or below) 
the threshold for fault-tolerant quantum computation.

\acknowledgments 

We thank Jay Gambetta for pointing out that estimating purity would reduce variance.
This work was supported by the ARC via EQuS project number CE11001013, 
by IARPA via the MQCO program, and by 
the US Army Research Office grant numbers W911NF-14-1-0098 and W911NF-14-1-0103. 
STF also acknowledges support from an ARC Future Fellowship FT130101744.

\bibliography{library}

\begin{thebibliography}{25}%
\makeatletter
\providecommand \@ifxundefined [1]{%
 \@ifx{#1\undefined}
}%
\providecommand \@ifnum [1]{%
 \ifnum #1\expandafter \@firstoftwo
 \else \expandafter \@secondoftwo
 \fi
}%
\providecommand \@ifx [1]{%
 \ifx #1\expandafter \@firstoftwo
 \else \expandafter \@secondoftwo
 \fi
}%
\providecommand \natexlab [1]{#1}%
\providecommand \enquote  [1]{``#1''}%
\providecommand \bibnamefont  [1]{#1}%
\providecommand \bibfnamefont [1]{#1}%
\providecommand \citenamefont [1]{#1}%
\providecommand \href@noop [0]{\@secondoftwo}%
\providecommand \href [0]{\begingroup \@sanitize@url \@href}%
\providecommand \@href[1]{\@@startlink{#1}\@@href}%
\providecommand \@@href[1]{\endgroup#1\@@endlink}%
\providecommand \@sanitize@url [0]{\catcode `\\12\catcode `\$12\catcode
  `\&12\catcode `\#12\catcode `\^12\catcode `\_12\catcode `\%12\relax}%
\providecommand \@@startlink[1]{}%
\providecommand \@@endlink[0]{}%
\providecommand \url  [0]{\begingroup\@sanitize@url \@url }%
\providecommand \@url [1]{\endgroup\@href {#1}{\urlprefix }}%
\providecommand \urlprefix  [0]{URL }%
\providecommand \Eprint [0]{\href }%
\providecommand \doibase [0]{http://dx.doi.org/}%
\providecommand \selectlanguage [0]{\@gobble}%
\providecommand \bibinfo  [0]{\@secondoftwo}%
\providecommand \bibfield  [0]{\@secondoftwo}%
\providecommand \translation [1]{[#1]}%
\providecommand \BibitemOpen [0]{}%
\providecommand \bibitemStop [0]{}%
\providecommand \bibitemNoStop [0]{.\EOS\space}%
\providecommand \EOS [0]{\spacefactor3000\relax}%
\providecommand \BibitemShut  [1]{\csname bibitem#1\endcsname}%
\let\auto@bib@innerbib\@empty
\bibitem [{\citenamefont {Chuang}\ and\ \citenamefont
  {Nielsen}(1997)}]{Chuang1997}%
  \BibitemOpen
  \bibfield  {author} {\bibinfo {author} {\bibfnamefont {Isaac~L.}\
  \bibnamefont {Chuang}}\ and\ \bibinfo {author} {\bibfnamefont {Michael~A.}\
  \bibnamefont {Nielsen}},\ }\bibfield  {title} {\enquote {\bibinfo {title}
  {{Prescription for experimental determination of the dynamics of a quantum
  black box}},}\ }\href {\doibase 10.1080/09500349708231894} {\bibfield
  {journal} {\bibinfo  {journal} {J. Mod. Opt.}\ }\textbf {\bibinfo {volume}
  {44}},\ \bibinfo {pages} {2455} (\bibinfo {year} {1997})}\BibitemShut
  {NoStop}%
\bibitem [{\citenamefont {Poyatos}\ \emph {et~al.}(1997)\citenamefont
  {Poyatos}, \citenamefont {Cirac},\ and\ \citenamefont
  {Zoller}}]{Poyatos1997}%
  \BibitemOpen
  \bibfield  {author} {\bibinfo {author} {\bibfnamefont {J.}~\bibnamefont
  {Poyatos}}, \bibinfo {author} {\bibfnamefont {J.}~\bibnamefont {Cirac}}, \
  and\ \bibinfo {author} {\bibfnamefont {P.}~\bibnamefont {Zoller}},\
  }\bibfield  {title} {\enquote {\bibinfo {title} {{Complete Characterization
  of a Quantum Process: The Two-Bit Quantum Gate}},}\ }\href {\doibase
  10.1103/PhysRevLett.78.390} {\bibfield  {journal} {\bibinfo  {journal} {Phys.
  Rev. Lett.}\ }\textbf {\bibinfo {volume} {78}},\ \bibinfo {pages} {390}
  (\bibinfo {year} {1997})}\BibitemShut {NoStop}%
\bibitem [{\citenamefont {Merkel}\ \emph {et~al.}(2013)\citenamefont {Merkel},
  \citenamefont {Gambetta}, \citenamefont {Smolin}, \citenamefont {Poletto},
  \citenamefont {C\'{o}rcoles}, \citenamefont {Johnson}, \citenamefont {Ryan},\
  and\ \citenamefont {Steffen}}]{Merkel2013}%
  \BibitemOpen
  \bibfield  {author} {\bibinfo {author} {\bibfnamefont {Seth~T.}\ \bibnamefont
  {Merkel}}, \bibinfo {author} {\bibfnamefont {Jay~M.}\ \bibnamefont
  {Gambetta}}, \bibinfo {author} {\bibfnamefont {John~a.}\ \bibnamefont
  {Smolin}}, \bibinfo {author} {\bibfnamefont {Stefano}\ \bibnamefont
  {Poletto}}, \bibinfo {author} {\bibfnamefont {Antonio~D.}\ \bibnamefont
  {C\'{o}rcoles}}, \bibinfo {author} {\bibfnamefont {Blake~R.}\ \bibnamefont
  {Johnson}}, \bibinfo {author} {\bibfnamefont {Colm~a.}\ \bibnamefont {Ryan}},
  \ and\ \bibinfo {author} {\bibfnamefont {Matthias}\ \bibnamefont {Steffen}},\
  }\bibfield  {title} {\enquote {\bibinfo {title} {{Self-consistent quantum
  process tomography}},}\ }\href {\doibase 10.1103/PhysRevA.87.062119}
  {\bibfield  {journal} {\bibinfo  {journal} {Phys. Rev. A}\ }\textbf {\bibinfo
  {volume} {87}},\ \bibinfo {pages} {062119} (\bibinfo {year}
  {2013})}\BibitemShut {NoStop}%
\bibitem [{\citenamefont {Emerson}\ \emph {et~al.}(2005)\citenamefont
  {Emerson}, \citenamefont {Alicki},\ and\ \citenamefont
  {\.{Z}yczkowski}}]{Emerson2005}%
  \BibitemOpen
  \bibfield  {author} {\bibinfo {author} {\bibfnamefont {Joseph}\ \bibnamefont
  {Emerson}}, \bibinfo {author} {\bibfnamefont {Robert}\ \bibnamefont
  {Alicki}}, \ and\ \bibinfo {author} {\bibfnamefont {Karol}\ \bibnamefont
  {\.{Z}yczkowski}},\ }\bibfield  {title} {\enquote {\bibinfo {title}
  {{Scalable noise estimation with random unitary operators}},}\ }\href
  {\doibase 10.1088/1464-4266/7/10/021} {\bibfield  {journal} {\bibinfo
  {journal} {J. Opt. B Quantum Semiclassical Opt.}\ }\textbf {\bibinfo {volume}
  {7}},\ \bibinfo {pages} {S347} (\bibinfo {year} {2005})}\BibitemShut
  {NoStop}%
\bibitem [{\citenamefont {Knill}\ \emph {et~al.}(2008)\citenamefont {Knill},
  \citenamefont {Leibfried}, \citenamefont {Reichle}, \citenamefont {Britton},
  \citenamefont {Blakestad}, \citenamefont {Jost}, \citenamefont {Langer},
  \citenamefont {Ozeri}, \citenamefont {Seidelin},\ and\ \citenamefont
  {Wineland}}]{Knill2008}%
  \BibitemOpen
  \bibfield  {author} {\bibinfo {author} {\bibfnamefont {E.}~\bibnamefont
  {Knill}}, \bibinfo {author} {\bibfnamefont {D.}~\bibnamefont {Leibfried}},
  \bibinfo {author} {\bibfnamefont {R.}~\bibnamefont {Reichle}}, \bibinfo
  {author} {\bibfnamefont {J.}~\bibnamefont {Britton}}, \bibinfo {author}
  {\bibfnamefont {R.}~\bibnamefont {Blakestad}}, \bibinfo {author}
  {\bibfnamefont {J.~D.}\ \bibnamefont {Jost}}, \bibinfo {author}
  {\bibfnamefont {C.}~\bibnamefont {Langer}}, \bibinfo {author} {\bibfnamefont
  {R.}~\bibnamefont {Ozeri}}, \bibinfo {author} {\bibfnamefont
  {S.}~\bibnamefont {Seidelin}}, \ and\ \bibinfo {author} {\bibfnamefont
  {D.~J.}\ \bibnamefont {Wineland}},\ }\bibfield  {title} {\enquote {\bibinfo
  {title} {{Randomized benchmarking of quantum gates}},}\ }\href {\doibase
  10.1103/PhysRevA.77.012307} {\bibfield  {journal} {\bibinfo  {journal} {Phys.
  Rev. A}\ }\textbf {\bibinfo {volume} {77}},\ \bibinfo {pages} {012307}
  (\bibinfo {year} {2008})}\BibitemShut {NoStop}%
\bibitem [{\citenamefont {Dankert}\ \emph {et~al.}(2009)\citenamefont
  {Dankert}, \citenamefont {Cleve}, \citenamefont {Emerson},\ and\
  \citenamefont {Livine}}]{Dankert2009}%
  \BibitemOpen
  \bibfield  {author} {\bibinfo {author} {\bibfnamefont {Christoph}\
  \bibnamefont {Dankert}}, \bibinfo {author} {\bibfnamefont {Richard}\
  \bibnamefont {Cleve}}, \bibinfo {author} {\bibfnamefont {Joseph}\
  \bibnamefont {Emerson}}, \ and\ \bibinfo {author} {\bibfnamefont {Etera}\
  \bibnamefont {Livine}},\ }\bibfield  {title} {\enquote {\bibinfo {title}
  {{Exact and approximate unitary 2-designs and their application to fidelity
  estimation}},}\ }\href {\doibase 10.1103/PhysRevA.80.012304} {\bibfield
  {journal} {\bibinfo  {journal} {Phys. Rev. A}\ }\textbf {\bibinfo {volume}
  {80}},\ \bibinfo {pages} {012304} (\bibinfo {year} {2009})}\BibitemShut
  {NoStop}%
\bibitem [{\citenamefont {L\'{o}pez}\ \emph {et~al.}(2010)\citenamefont
  {L\'{o}pez}, \citenamefont {Bendersky}, \citenamefont {Paz},\ and\
  \citenamefont {Cory}}]{Lopez2010}%
  \BibitemOpen
  \bibfield  {author} {\bibinfo {author} {\bibfnamefont {Cecilia~C.}\
  \bibnamefont {L\'{o}pez}}, \bibinfo {author} {\bibfnamefont {Ariel}\
  \bibnamefont {Bendersky}}, \bibinfo {author} {\bibfnamefont {Juan~Pablo}\
  \bibnamefont {Paz}}, \ and\ \bibinfo {author} {\bibfnamefont {David~G.}\
  \bibnamefont {Cory}},\ }\bibfield  {title} {\enquote {\bibinfo {title}
  {{Progress toward scalable tomography of quantum maps using twirling-based
  methods and information hierarchies}},}\ }\href {\doibase
  10.1103/PhysRevA.81.062113} {\bibfield  {journal} {\bibinfo  {journal} {Phys.
  Rev. A}\ }\textbf {\bibinfo {volume} {81}},\ \bibinfo {pages} {062113}
  (\bibinfo {year} {2010})}\BibitemShut {NoStop}%
\bibitem [{\citenamefont {Magesan}\ \emph {et~al.}(2011)\citenamefont
  {Magesan}, \citenamefont {Gambetta},\ and\ \citenamefont
  {Emerson}}]{Magesan2011}%
  \BibitemOpen
  \bibfield  {author} {\bibinfo {author} {\bibfnamefont {Easwar}\ \bibnamefont
  {Magesan}}, \bibinfo {author} {\bibfnamefont {Jay~M.}\ \bibnamefont
  {Gambetta}}, \ and\ \bibinfo {author} {\bibfnamefont {Joseph}\ \bibnamefont
  {Emerson}},\ }\bibfield  {title} {\enquote {\bibinfo {title} {{Scalable and
  Robust Randomized Benchmarking of Quantum Processes}},}\ }\href {\doibase
  10.1103/PhysRevLett.106.180504} {\bibfield  {journal} {\bibinfo  {journal}
  {Phys. Rev. Lett.}\ }\textbf {\bibinfo {volume} {106}},\ \bibinfo {pages}
  {180504} (\bibinfo {year} {2011})}\BibitemShut {NoStop}%
\bibitem [{\citenamefont {Magesan}\ \emph
  {et~al.}(2012{\natexlab{a}})\citenamefont {Magesan}, \citenamefont
  {Gambetta},\ and\ \citenamefont {Emerson}}]{Magesan2012a}%
  \BibitemOpen
  \bibfield  {author} {\bibinfo {author} {\bibfnamefont {Easwar}\ \bibnamefont
  {Magesan}}, \bibinfo {author} {\bibfnamefont {Jay~M.}\ \bibnamefont
  {Gambetta}}, \ and\ \bibinfo {author} {\bibfnamefont {Joseph}\ \bibnamefont
  {Emerson}},\ }\bibfield  {title} {\enquote {\bibinfo {title} {{Characterizing
  quantum gates via randomized benchmarking}},}\ }\href {\doibase
  10.1103/PhysRevA.85.042311} {\bibfield  {journal} {\bibinfo  {journal} {Phys.
  Rev. A}\ }\textbf {\bibinfo {volume} {85}},\ \bibinfo {pages} {042311}
  (\bibinfo {year} {2012}{\natexlab{a}})}\BibitemShut {NoStop}%
\bibitem [{\citenamefont {Flammia}\ and\ \citenamefont
  {Liu}(2011)}]{Flammia2011}%
  \BibitemOpen
  \bibfield  {author} {\bibinfo {author} {\bibfnamefont {Steven~T.}\
  \bibnamefont {Flammia}}\ and\ \bibinfo {author} {\bibfnamefont {Yi-Kai}\
  \bibnamefont {Liu}},\ }\bibfield  {title} {\enquote {\bibinfo {title} {Direct
  fidelity estimation from few pauli measurements},}\ }\href {\doibase
  10.1103/PhysRevLett.106.230501} {\bibfield  {journal} {\bibinfo  {journal}
  {Phys. Rev. Lett.}\ }\textbf {\bibinfo {volume} {106}},\ \bibinfo {pages}
  {230501} (\bibinfo {year} {2011})},\ \Eprint {http://arxiv.org/abs/1104.4695}
  {arXiv:1104.4695} \BibitemShut {NoStop}%
\bibitem [{\citenamefont {{da Silva}}\ \emph {et~al.}(2011)\citenamefont {{da
  Silva}}, \citenamefont {{Landon-Cardinal}},\ and\ \citenamefont
  {{Poulin}}}]{da-Silva2011}%
  \BibitemOpen
  \bibfield  {author} {\bibinfo {author} {\bibfnamefont {M.~P.}\ \bibnamefont
  {{da Silva}}}, \bibinfo {author} {\bibfnamefont {O.}~\bibnamefont
  {{Landon-Cardinal}}}, \ and\ \bibinfo {author} {\bibfnamefont
  {D.}~\bibnamefont {{Poulin}}},\ }\bibfield  {title} {\enquote {\bibinfo
  {title} {Practical characterization of quantum devices without tomography},}\
  }\href@noop {} {\bibfield  {journal} {\bibinfo  {journal} {Phys. Rev. Lett.}\
  }\textbf {\bibinfo {volume} {107}},\ \bibinfo {pages} {210404} (\bibinfo
  {year} {2011})}\BibitemShut {NoStop}%
\bibitem [{\citenamefont {Kitaev}(1997)}]{Kitaev1997}%
  \BibitemOpen
  \bibfield  {author} {\bibinfo {author} {\bibfnamefont {Alexei}\ \bibnamefont
  {Kitaev}},\ }\bibfield  {title} {\enquote {\bibinfo {title} {{Quantum
  computations: algorithms and error correction}},}\ }\href {\doibase
  10.1070/RM1997v052n06ABEH002155} {\bibfield  {journal} {\bibinfo  {journal}
  {Russ. Math. Surv.}\ }\textbf {\bibinfo {volume} {52}},\ \bibinfo {pages}
  {1191--1249} (\bibinfo {year} {1997})}\BibitemShut {NoStop}%
\bibitem [{\citenamefont {Wallman}\ and\ \citenamefont
  {Flammia}(2014)}]{Wallman2014}%
  \BibitemOpen
  \bibfield  {author} {\bibinfo {author} {\bibfnamefont {Joel~J}\ \bibnamefont
  {Wallman}}\ and\ \bibinfo {author} {\bibfnamefont {Steven~T}\ \bibnamefont
  {Flammia}},\ }\bibfield  {title} {\enquote {\bibinfo {title} {{Randomized
  benchmarking with confidence}},}\ }\href {\doibase
  10.1088/1367-2630/16/10/103032} {\bibfield  {journal} {\bibinfo  {journal}
  {New J. Phys.}\ }\textbf {\bibinfo {volume} {16}},\ \bibinfo {pages} {103032}
  (\bibinfo {year} {2014})}\BibitemShut {NoStop}%
\bibitem [{\citenamefont {Sanders}\ \emph {et~al.}(2015)\citenamefont
  {Sanders}, \citenamefont {Wallman},\ and\ \citenamefont
  {Sanders}}]{Sanders2015}%
  \BibitemOpen
  \bibfield  {author} {\bibinfo {author} {\bibfnamefont {Yuval~R.}\
  \bibnamefont {Sanders}}, \bibinfo {author} {\bibfnamefont {Joel~J.}\
  \bibnamefont {Wallman}}, \ and\ \bibinfo {author} {\bibfnamefont {Barry~C.}\
  \bibnamefont {Sanders}},\ }\bibfield  {title} {\enquote {\bibinfo {title}
  {{Bounding Quantum Gate Error Rate Based on Reported Gate Fidelity}},}\
  }\href@noop {} {\ ,\ \bibinfo {pages} {1--5} (\bibinfo {year} {2015})},\
  \Eprint {http://arxiv.org/abs/1501.04932v2} {arXiv:1501.04932v2} \BibitemShut
  {NoStop}%
\bibitem [{\citenamefont {O'Malley}\ \emph {et~al.}(2014)\citenamefont
  {O'Malley}, \citenamefont {Kelly}, \citenamefont {Barends}, \citenamefont
  {Campbell}, \citenamefont {Chen}, \citenamefont {Chen}, \citenamefont
  {Chiaro}, \citenamefont {Dunsworth}, \citenamefont {Fowler}, \citenamefont
  {Hoi}, \citenamefont {Jeffrey}, \citenamefont {Megrant}, \citenamefont
  {Mutus}, \citenamefont {Neill}, \citenamefont {Quintana}, \citenamefont
  {Roushan}, \citenamefont {Sank}, \citenamefont {Vainsencher}, \citenamefont
  {Wenner}, \citenamefont {White}, \citenamefont {Korotkov}, \citenamefont
  {Cleland},\ and\ \citenamefont {Martinis}}]{OMalley2014}%
  \BibitemOpen
  \bibfield  {author} {\bibinfo {author} {\bibfnamefont {P.~J.~J.}\
  \bibnamefont {O'Malley}}, \bibinfo {author} {\bibfnamefont {J}~\bibnamefont
  {Kelly}}, \bibinfo {author} {\bibfnamefont {R}~\bibnamefont {Barends}},
  \bibinfo {author} {\bibfnamefont {B.}~\bibnamefont {Campbell}}, \bibinfo
  {author} {\bibfnamefont {Y.}~\bibnamefont {Chen}}, \bibinfo {author}
  {\bibfnamefont {Z.}~\bibnamefont {Chen}}, \bibinfo {author} {\bibfnamefont
  {B.}~\bibnamefont {Chiaro}}, \bibinfo {author} {\bibfnamefont
  {A.}~\bibnamefont {Dunsworth}}, \bibinfo {author} {\bibfnamefont {A.~G.}\
  \bibnamefont {Fowler}}, \bibinfo {author} {\bibfnamefont {I.~C.}\
  \bibnamefont {Hoi}}, \bibinfo {author} {\bibfnamefont {E.}~\bibnamefont
  {Jeffrey}}, \bibinfo {author} {\bibfnamefont {A.}~\bibnamefont {Megrant}},
  \bibinfo {author} {\bibfnamefont {J.}~\bibnamefont {Mutus}}, \bibinfo
  {author} {\bibfnamefont {C.}~\bibnamefont {Neill}}, \bibinfo {author}
  {\bibfnamefont {C.}~\bibnamefont {Quintana}}, \bibinfo {author}
  {\bibfnamefont {P.}~\bibnamefont {Roushan}}, \bibinfo {author} {\bibfnamefont
  {D.}~\bibnamefont {Sank}}, \bibinfo {author} {\bibfnamefont {A.}~\bibnamefont
  {Vainsencher}}, \bibinfo {author} {\bibfnamefont {J.}~\bibnamefont {Wenner}},
  \bibinfo {author} {\bibfnamefont {T.~C.}\ \bibnamefont {White}}, \bibinfo
  {author} {\bibfnamefont {A.~N.}\ \bibnamefont {Korotkov}}, \bibinfo {author}
  {\bibfnamefont {A.~N.}\ \bibnamefont {Cleland}}, \ and\ \bibinfo {author}
  {\bibfnamefont {John~M.}\ \bibnamefont {Martinis}},\ }\bibfield  {title}
  {\enquote {\bibinfo {title} {{Overcoming correlated noise in quantum systems:
  How mediocre clocks make good qubits}},}\ }\href
  {http://arxiv.org/abs/1411.2613} {\ ,\ \bibinfo {pages} {10} (\bibinfo {year}
  {2014})},\ \Eprint {http://arxiv.org/abs/1411.2613} {arXiv:1411.2613}
  \BibitemShut {NoStop}%
\bibitem [{\citenamefont {Wallman}\ \emph {et~al.}(2015)\citenamefont
  {Wallman}, \citenamefont {Barnhill},\ and\ \citenamefont
  {Emerson}}]{Wallman2014a}%
  \BibitemOpen
  \bibfield  {author} {\bibinfo {author} {\bibfnamefont {Joel~J.}\ \bibnamefont
  {Wallman}}, \bibinfo {author} {\bibfnamefont {Marie}\ \bibnamefont
  {Barnhill}}, \ and\ \bibinfo {author} {\bibfnamefont {Joseph}\ \bibnamefont
  {Emerson}},\ }\bibfield  {title} {\enquote {\bibinfo {title} {{Robust
  Characterization of Loss Rates}},}\ }\href {\doibase
  http://dx.doi.org/10.1103/PhysRevLett.115.060501} {\bibfield  {journal}
  {\bibinfo  {journal} {Phys. Rev. Lett.}\ }\textbf {\bibinfo {volume} {115}},\
  \bibinfo {pages} {060501} (\bibinfo {year} {2015})},\ \Eprint
  {http://arxiv.org/abs/1412.4126} {arXiv:1412.4126} \BibitemShut {NoStop}%
\bibitem [{\citenamefont {Baumgratz}\ \emph {et~al.}(2014)\citenamefont
  {Baumgratz}, \citenamefont {Cramer},\ and\ \citenamefont
  {Plenio}}]{Baumgratz2014}%
  \BibitemOpen
  \bibfield  {author} {\bibinfo {author} {\bibfnamefont {T.}~\bibnamefont
  {Baumgratz}}, \bibinfo {author} {\bibfnamefont {M.}~\bibnamefont {Cramer}}, \
  and\ \bibinfo {author} {\bibfnamefont {M.~B.}\ \bibnamefont {Plenio}},\
  }\bibfield  {title} {\enquote {\bibinfo {title} {Quantifying coherence},}\
  }\href {\doibase 10.1103/PhysRevLett.113.140401} {\bibfield  {journal}
  {\bibinfo  {journal} {Phys. Rev. Lett.}\ }\textbf {\bibinfo {volume} {113}},\
  \bibinfo {pages} {140401} (\bibinfo {year} {2014})},\ \Eprint
  {http://arxiv.org/abs/1311.0275} {arXiv:1311.0275} \BibitemShut {NoStop}%
\bibitem [{\citenamefont {Bruzda}\ \emph {et~al.}(2009)\citenamefont {Bruzda},
  \citenamefont {Cappellini}, \citenamefont {Sommers},\ and\ \citenamefont
  {\.{Z}yczkowski}}]{bruzda_random_2009}%
  \BibitemOpen
  \bibfield  {author} {\bibinfo {author} {\bibfnamefont {Wojciech}\
  \bibnamefont {Bruzda}}, \bibinfo {author} {\bibfnamefont {Valerio}\
  \bibnamefont {Cappellini}}, \bibinfo {author} {\bibfnamefont
  {Hans-J\"{u}rgen}\ \bibnamefont {Sommers}}, \ and\ \bibinfo {author}
  {\bibfnamefont {Karol}\ \bibnamefont {\.{Z}yczkowski}},\ }\bibfield  {title}
  {\enquote {\bibinfo {title} {Random quantum operations},}\ }\href {\doibase
  10.1016/j.physleta.2008.11.043} {\bibfield  {journal} {\bibinfo  {journal}
  {Physics Letters A}\ }\textbf {\bibinfo {volume} {373}},\ \bibinfo {pages}
  {320--324} (\bibinfo {year} {2009})}\BibitemShut {NoStop}%
\bibitem [{\citenamefont {Johansson}\ \emph {et~al.}(2013)\citenamefont
  {Johansson}, \citenamefont {Nation},\ and\ \citenamefont
  {Nori}}]{johansson_qutip_2013}%
  \BibitemOpen
  \bibfield  {author} {\bibinfo {author} {\bibfnamefont {J.~R.}\ \bibnamefont
  {Johansson}}, \bibinfo {author} {\bibfnamefont {P.~D.}\ \bibnamefont
  {Nation}}, \ and\ \bibinfo {author} {\bibfnamefont {Franco}\ \bibnamefont
  {Nori}},\ }\bibfield  {title} {\enquote {\bibinfo {title} {{QuTiP} 2: {A}
  {Python} framework for the dynamics of open quantum systems},}\ }\href
  {\doibase 10.1016/j.cpc.2012.11.019} {\bibfield  {journal} {\bibinfo
  {journal} {Computer Physics Communications}\ }\textbf {\bibinfo {volume}
  {184}},\ \bibinfo {pages} {1234--1240} (\bibinfo {year} {2013})},\ \bibinfo
  {note} {arXiv:1211.6518 [quant-ph]}\BibitemShut {NoStop}%
\bibitem [{\citenamefont {Goodman}\ and\ \citenamefont
  {Wallach}(2009)}]{Goodman2009}%
  \BibitemOpen
  \bibfield  {author} {\bibinfo {author} {\bibfnamefont {Roe}\ \bibnamefont
  {Goodman}}\ and\ \bibinfo {author} {\bibfnamefont {Nolan~R.}\ \bibnamefont
  {Wallach}},\ }\href@noop {} {\emph {\bibinfo {title} {{Symmetry,
  Representations, and Invariants}}}},\ Graduate Texts in Mathematics\
  (\bibinfo  {publisher} {Springer},\ \bibinfo {year} {2009})\BibitemShut
  {NoStop}%
\bibitem [{\citenamefont {Gross}\ \emph {et~al.}(2007)\citenamefont {Gross},
  \citenamefont {Audenaert},\ and\ \citenamefont {Eisert}}]{Gross2007a}%
  \BibitemOpen
  \bibfield  {author} {\bibinfo {author} {\bibfnamefont {David}\ \bibnamefont
  {Gross}}, \bibinfo {author} {\bibfnamefont {K.}~\bibnamefont {Audenaert}}, \
  and\ \bibinfo {author} {\bibfnamefont {Jens}\ \bibnamefont {Eisert}},\
  }\bibfield  {title} {\enquote {\bibinfo {title} {{Evenly distributed
  unitaries: On the structure of unitary designs}},}\ }\href {\doibase
  10.1063/1.2716992} {\bibfield  {journal} {\bibinfo  {journal} {J. Math.
  Phys.}\ }\textbf {\bibinfo {volume} {48}},\ \bibinfo {pages} {052104}
  (\bibinfo {year} {2007})}\BibitemShut {NoStop}%
\bibitem [{\citenamefont {Watrous}(2011)}]{Watrous2011}%
  \BibitemOpen
  \bibfield  {author} {\bibinfo {author} {\bibfnamefont {John}\ \bibnamefont
  {Watrous}},\ }\href
  {https://cs.uwaterloo.ca/$\sim$watrous/CS766/LectureNotes/all.pdf} {\enquote
  {\bibinfo {title} {{CS 766 / QIC 820 Theory of Quantum Information (Fall
  2011)}},}\ } (\bibinfo {year} {2011})\BibitemShut {NoStop}%
\bibitem [{\citenamefont {Evans}\ and\ \citenamefont
  {Hoegh-Krohn}(1978)}]{Evans1978}%
  \BibitemOpen
  \bibfield  {author} {\bibinfo {author} {\bibfnamefont {David~E}\ \bibnamefont
  {Evans}}\ and\ \bibinfo {author} {\bibfnamefont {R.}~\bibnamefont
  {Hoegh-Krohn}},\ }\bibfield  {title} {\enquote {\bibinfo {title} {{Spectral
  Properties of Positive Maps on $C^*$-Algebras}},}\ }\href {\doibase
  10.1112/jlms/s2-17.2.345} {\bibfield  {journal} {\bibinfo  {journal} {J.
  London Math. Soc.}\ }\textbf {\bibinfo {volume} {17}},\ \bibinfo {pages}
  {345--355} (\bibinfo {year} {1978})}\BibitemShut {NoStop}%
\bibitem [{\citenamefont {Wolf}\ and\ \citenamefont {Cirac}(2008)}]{Wolf2008}%
  \BibitemOpen
  \bibfield  {author} {\bibinfo {author} {\bibfnamefont {Michael~M.}\
  \bibnamefont {Wolf}}\ and\ \bibinfo {author} {\bibfnamefont {J.~Ignacio}\
  \bibnamefont {Cirac}},\ }\bibfield  {title} {\enquote {\bibinfo {title}
  {{Dividing Quantum Channels}},}\ }\href {\doibase 10.1007/s00220-008-0411-y}
  {\bibfield  {journal} {\bibinfo  {journal} {Commun. Math. Phys.}\ }\textbf
  {\bibinfo {volume} {279}},\ \bibinfo {pages} {147--168} (\bibinfo {year}
  {2008})}\BibitemShut {NoStop}%
\bibitem [{\citenamefont {Magesan}\ \emph
  {et~al.}(2012{\natexlab{b}})\citenamefont {Magesan}, \citenamefont
  {Gambetta}, \citenamefont {Johnson}, \citenamefont {Ryan}, \citenamefont
  {Chow}, \citenamefont {Merkel}, \citenamefont {da~Silva}, \citenamefont
  {Keefe}, \citenamefont {Rothwell}, \citenamefont {Ohki}, \citenamefont
  {Ketchen},\ and\ \citenamefont {Steffen}}]{Magesan2012}%
  \BibitemOpen
  \bibfield  {author} {\bibinfo {author} {\bibfnamefont {Easwar}\ \bibnamefont
  {Magesan}}, \bibinfo {author} {\bibfnamefont {Jay~M.}\ \bibnamefont
  {Gambetta}}, \bibinfo {author} {\bibfnamefont {B.~R.}\ \bibnamefont
  {Johnson}}, \bibinfo {author} {\bibfnamefont {Colm~A.}\ \bibnamefont {Ryan}},
  \bibinfo {author} {\bibfnamefont {Jerry~M.}\ \bibnamefont {Chow}}, \bibinfo
  {author} {\bibfnamefont {Seth~T.}\ \bibnamefont {Merkel}}, \bibinfo {author}
  {\bibfnamefont {Marcus~P.}\ \bibnamefont {da~Silva}}, \bibinfo {author}
  {\bibfnamefont {George~A.}\ \bibnamefont {Keefe}}, \bibinfo {author}
  {\bibfnamefont {Mary~B.}\ \bibnamefont {Rothwell}}, \bibinfo {author}
  {\bibfnamefont {Thomas~A.}\ \bibnamefont {Ohki}}, \bibinfo {author}
  {\bibfnamefont {Mark~B.}\ \bibnamefont {Ketchen}}, \ and\ \bibinfo {author}
  {\bibfnamefont {M.}~\bibnamefont {Steffen}},\ }\bibfield  {title} {\enquote
  {\bibinfo {title} {Efficient measurement of quantum gate error by interleaved
  randomized benchmarking},}\ }\href {\doibase 10.1103/PhysRevLett.109.080505}
  {\bibfield  {journal} {\bibinfo  {journal} {Phys. Rev. Lett.}\ }\textbf
  {\bibinfo {volume} {109}},\ \bibinfo {pages} {080505} (\bibinfo {year}
  {2012}{\natexlab{b}})}\BibitemShut {NoStop}%
\end{thebibliography}%

\end{document}